\def\draft{0}  

%

\documentclass[envcountsect,envcountsame,runningheads]{llncs}

\spnewtheorem{thm}{Theorem}[section]{\bfseries}{\itshape}

%



\pagestyle{plain}

\usepackage{amsmath}
\usepackage{amssymb}
\usepackage[shortlabels]{enumitem}
\usepackage{graphicx}
\usepackage{hyperref}
\usepackage{grffile}
\usepackage{comment}
\usepackage{color}
\usepackage{float}
\usepackage{thm-restate}
\usepackage{bm}
\usepackage{mathtools}
\usepackage{float}
\usepackage{subfigure}
\usepackage{graphicx}
\usepackage{ifthen}
\usepackage{xcolor}

\newcommand{\mechanism}{\mathcal{M}}
\newcommand{\analyst}{\mathcal{A}}
\newcommand{\adversary}{\mathcal{A}}
\newcommand{\X}{ \mathcal{X} }
\newcommand{\Y}{ \mathcal{Y} }
\newcommand{\RR}{\mathrm{RR}}

\newcommand{\eps}{\varepsilon}
\newcommand{\analystq}{\mathcal{A}_{q_1,a_1}}
\newcommand{\mechq}[1]{\mechanism^{>1}|_{q_1, #1}}
\newcommand{\epszero}{\eps_0}
\newcommand{\epszerogo}{\eps_0^{>1}}

\newcommand{\concomp}{\mathrm{ConComp}}
\newcommand{\comp}{\mathrm{Comp}}

\newcommand{\halt}{\texttt{halt}}
\newcommand{\View}{\texttt{View}}
\newcommand{\view}[2]{ \View \langle #1, #2 \rangle }
\newcommand{\range}{\mathrm{Range}}

\newcommand{\order}{\texttt{order}}

\newcommand{\postprocessing}{\texttt{Post}}

\newcommand{\at}{\bm{a}^{(t)}}
\newcommand{\newdelta}{r\delta^{\frac{1}{r}}}

\newcommand{\add}[1]{\textcolor{blue}{#1}}
\ifnum\draft=1
\newcommand{\salil}[1]{\textcolor{green}{[Salil: #1]}}
\newcommand{\tianhao}[1]{\textcolor{orange}{[Tianhao: #1]}}
\else
\newcommand{\salil}[1]{}
\newcommand{\tianhao}[1]{}
\fi

\begin{document}
\title{Concurrent Composition of Differential Privacy 
\ifnum\draft=1\\
\normalsize{working draft: please do not distribute}\fi}
%
%
\author{Salil Vadhan\thanks{Supported by NSF grant CNS-1565387, a grant from the Sloan Foundation, and a Simons Investigator Award.} \and Tianhao Wang\thanks{Work done while at Harvard University.}}
\authorrunning{S. Vadhan and T. Wang}
%
\institute{Harvard University\\
\email{salil\_vadhan@harvard.edu}
\and
Princeton University \\
\email{tianhaowang@princeton.edu}}
\maketitle              
\begin{abstract}
We initiate a study of the composition properties of {\em interactive} differentially private mechanisms.  An interactive differentially private mechanism is an algorithm that allows an analyst to adaptively ask queries about a sensitive dataset, with the property that an adversarial analyst's view of the interaction is approximately the same regardless of whether or not any individual's data is in the dataset.  Previous studies of composition of differential privacy have focused on non-interactive algorithms, but interactive mechanisms are needed to capture many of the intended applications of differential privacy and a number of the important differentially private primitives.  

We focus on {\em concurrent composition}, where an adversary can arbitrarily interleave its queries to several differentially private mechanisms, which may be feasible when differentially private query systems are deployed in practice.  We prove that when the interactive mechanisms being composed are {\em pure} differentially private, their concurrent composition achieves 
privacy parameters (with respect to pure or approximate differential privacy) that match the (optimal) composition theorem for noninteractive differential privacy.  We also prove a composition theorem for interactive mechanisms that satisfy approximate differential privacy.  That bound is weaker than even the basic (suboptimal) composition theorem for noninteractive differential privacy, and we leave closing the gap as a direction for future research, along
with understanding concurrent composition for other variants of differential privacy.
\keywords{Interactive Differential Privacy \and Concurrent Composition Theorem.}
\end{abstract}

\section{Introduction}
\salil{I accidentally started editing intro.tex, not realizing that was for your thesis and not the submission, so you may want to revert that file to the previous version}

\newcommand{\cX}{\mathcal{X}}
\newcommand{\cY}{\mathcal{Y}}
\newcommand{\MS}{\mathrm{MultiSets}}

\subsection{Differential Privacy}

Differential privacy is a framework for protecting privacy when performing statistical
releases on a dataset with sensitive information about individuals.  
(See the surveys \cite{dwork2014algorithmic,vadhan2017complexity}.)
Specifically, for a differentially private mechanism, the probability distribution of the mechanism's outputs of a dataset should be nearly identical to the distribution of its outputs on the same dataset with any single individual’s data replaced. To formalize this, we call two datasets $x$, $x'$, each multisets over a data universe $\cX$, {\em adjacent} if one can be obtained from the other by adding or removing a single element of $\cX$.

\begin{definition}[Differential Privacy \cite{dwork2006calibrating}] \label{def:DP}
For $\eps, \delta \ge 0$, a randomized algorithm $\mechanism : \MS(\cX)\rightarrow \cY$ is 
{\em $(\eps, \delta)$-differentially private} if for every pair of adjacent datasets $x, x'\in \MS(\cX)$, we have: 
\begin{equation} \label{req:approxDP}
\forall\ T\subseteq \cY\ \Pr[\mechanism(x) \in T] \le e^\eps\cdot \Pr[\mechanism(x') \in T] + \delta
\end{equation}
where the randomness is over the coin flips of the algorithm $\mechanism$. 
\end{definition}

In the practice of differential privacy, we generally view $\eps$ as ``privacy-loss budget'' that is small but non-negligible (e.g. $\eps=0.1$), and we view $\delta$ as cryptographically negligible (e.g. $\delta=2^{-60}$). We refer to the case where $\delta=0$ as \emph{pure differential privacy}, and the case where $\delta >0$ as \emph{approximate differential privacy}. 

\salil{removed post-processing thm from intro, so insert it back into technical sections where it is needed}\tianhao{done}

\subsection{Composition of Differential Privacy}

A crucial property of differential privacy is its behavior under composition. If we run multiple distinct differentially private algorithms on the same dataset, the resulting composed algorithm is also differentially private, with some degradation in the privacy parameters $(\eps, \delta)$. This property is especially important and useful since in practice we rarely want to release only a single statistic about a dataset. Releasing many statistics may require running multiple differentially private algorithms on the same database. Composition is also a very useful tool in algorithm design. In many cases, new differentially private algorithms are created by combining several simpler algorithms. The composition theorems help us analyze the privacy properties of algorithms designed in this way.

Formally, let $\mechanism_0, \mechanism_1, \dots, \mechanism_{k-1}$ be differentially private mechanisms, we define the composition of these mechanisms by independently executing them.
Specifically, we define $\mechanism = \comp(\mechanism_0,\mechanism_1,\ldots,\mechanism_{k-1})$ as follows: \salil{added use of $\comp$ notation here}
$$\mechanism(x) 
= \left(\mechanism_0(x), \dots, \mechanism_{k-1}(x) \right)
$$
where each $\mechanism_i$ is run with independent coin tosses. For example, this is how we might obtain a mechanism answering a $k$-tuple of queries. 

A handful of composition theorems already exist in the literature. The Basic Composition Theorem says that the privacy degrades at most linearly with the number of mechanisms executed. 

\begin{theorem}[Basic Composition \cite{dwork2006our}]
\label{thm:basic-comp}
For every $\eps \ge 0$, $\delta \in [0, 1]$, if $\mechanism_0, \dots, \mechanism_{k-1}$ are each $(\eps, \delta)$-differentially private mechanisms, then their composition $\comp(\mechanism_0, \dots, \mechanism_{k-1})$ is $(k\eps, k\delta)$-differentially private. 
\end{theorem}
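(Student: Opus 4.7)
The plan is to prove the statement by induction on $k$. The base case $k=1$ is immediate from the hypothesis. For the inductive step, I would decompose $\comp(\mechanism_0,\ldots,\mechanism_{k-1})$ as the two-mechanism composition of $\mechanism' = \comp(\mechanism_0,\ldots,\mechanism_{k-2})$ (which is $((k-1)\eps,(k-1)\delta)$-differentially private by induction) with the single $(\eps,\delta)$-differentially private mechanism $\mechanism_{k-1}$. So it suffices to prove the following lemma: if $\mechanism_1$ is $(\eps_1,\delta_1)$-DP and $\mechanism_2$ is $(\eps_2,\delta_2)$-DP and they are run with independent coins, then $\comp(\mechanism_1,\mechanism_2)$ is $(\eps_1+\eps_2,\delta_1+\delta_2)$-DP.

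To prove this two-mechanism lemma I would fix adjacent datasets $x,x'$ and an event $T\subseteq\cY_1\times\cY_2$, and use independence to write
$$
\Pr[\comp(\mechanism_1,\mechanism_2)(x)\in T] \;=\; \sum_{y_1} \Pr[\mechanism_1(x)=y_1]\cdot \Pr[\mechanism_2(x)\in T_{y_1}],
$$
where $T_{y_1}=\{y_2 : (y_1,y_2)\in T\}$ is the $y_1$-slice of $T$. First I would apply $(\eps_2,\delta_2)$-DP of $\mechanism_2$ slice-by-slice to bound each $\Pr[\mechanism_2(x)\in T_{y_1}]$ by $e^{\eps_2}\Pr[\mechanism_2(x')\in T_{y_1}]+\delta_2$. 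Then I would apply $(\eps_1,\delta_1)$-DP of $\mechanism_1$ to the remaining outer expectation against the $[0,1]$-valued function $g(y_1)=\Pr[\mechanism_2(x')\in T_{y_1}]$, using the fact that $(\eps,\delta)$-DP extends from indicator functions to any $[0,1]$-valued function via the layer-cake identity $\mathbb{E}[g(Y)] = \int_0^1 \Pr[g(Y)\ge t]\,dt$.

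The main obstacle is obtaining exactly $\delta_1+\delta_2$ on the nose rather than something like $e^{\eps_2}\delta_1+\delta_2$, which is what drops out of the above calculation if one is not careful. To get the clean additive bound, I would instead pass through the equivalent characterization of $(\eps,\delta)$-DP in terms of a $\delta$-approximate max-divergence $D_\infty^\delta(\mechanism(x)\,\|\,\mechanism(x'))\le\eps$, which roughly says that there is a ``failure'' event of probability at most $\delta$ on whose complement the pointwise log-ratio of the two output distributions is bounded by $\eps$. Under this reformulation the $\eps_i$'s add on the complement of the failure events while the $\delta_i$'s add by a union bound over those events, yielding $(\eps_1+\eps_2,\delta_1+\delta_2)$-DP and hence $(k\eps,k\delta)$-DP after iterating the lemma $k-1$ times.
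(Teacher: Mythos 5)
The paper does not actually prove Theorem~\ref{thm:basic-comp}; it states it as a known background result with a citation to \cite{dwork2006our}, so there is no in-paper argument to compare your proposal against. That said, your plan is a correct and standard route to the clean $(k\eps,k\delta)$ bound. You rightly flag that the naive slice-by-slice calculation (layer-cake to pass from events to $[0,1]$-valued test functions, then apply the two DP guarantees in turn) only yields $e^{\eps_2}\delta_1+\delta_2$, and that the remedy is a decomposition that isolates a small-probability ``bad'' part so that the $\delta$'s combine by a union bound while the $\eps$'s add on the remainder.

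One caution on how you phrase that decomposition. If you literally \emph{condition} on the complement of a failure event of probability $\le\delta$, the conditioned distribution is renormalized by $1/(1-\delta)$, and a multiplicative correction leaks back in when you undo the conditioning. The version that works on the nose is asymmetric and un-normalized: the one-sided inequality $\Pr[\mechanism(x)\in T]\le e^\eps\Pr[\mechanism(x')\in T]+\delta$ for all $T$ holds iff there is an event $F$ (possibly randomized as a function of the outcome) with $\Pr_{\mechanism(x)}[F]\le\delta$ and $\Pr_{\mechanism(x)}[T\wedge\neg F]\le e^\eps\Pr[\mechanism(x')\in T]$ for all $T$ --- note there is no division by $\Pr[\neg F]$ on either side. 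Equivalently, there is a sub-probability measure $P'\le \mechanism(x)$ of total mass $\ge 1-\delta$ with $P'(y)\le e^\eps\Pr[\mechanism(x')=y]$ pointwise; this is the form in, e.g., the decomposition lemma of Murtagh--Vadhan and in Dwork--Roth. With that formulation the $k$-fold tensorization is immediate (union bound over the $F_i$'s, product of the pointwise-bounded parts), and you should treat the two directions of the DP inequality separately rather than look for a single two-sided failure event. Modulo that precision, your argument is sound.
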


Theorem \ref{thm:basic-comp} shows the global privacy degradation is linear in the number of mechanisms in the composition. 
However, if we are willing to tolerate an increase in the $\delta$ term, the privacy parameter $\eps$ only needs to degrade proportionally to $\sqrt{k}$:

\begin{theorem}[Advanced Composition \cite{dwork2010boosting}]
\label{thm:advanced-comp}
For all $\eps \ge 0$, $\delta \in [0, 1]$,
if $\mechanism_0, \dots, \mechanism_{k-1}$ are each $(\eps, \delta)$-differentially private mechanisms and $k < 1/\eps^2$, then for all $\delta' \in (0, 1/2)$, the composition $(\mechanism_0, \dots, \mechanism_{k-1})$ is $ \left(O \left(\sqrt{k \log (1/\delta')} \right) \cdot \eps, k\delta + \delta' \right)$-differentially private. 
\end{theorem}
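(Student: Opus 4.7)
\medskip

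\noindent\textbf{Proof proposal.} The plan is to work with the \emph{privacy loss random variable} and convert concentration of its sum into an $(\eps',\delta')$-DP guarantee. Fix any two adjacent datasets $x,x'$ and let $P_i$, $Q_i$ denote the distributions of $\mechanism_i(x)$, $\mechanism_i(x')$ respectively. Since the coins of the $\mechanism_i$'s are independent, the output distribution of the composition factors as the product $P=P_0\otimes\cdots\otimes P_{k-1}$ (and similarly $Q$), so the log-likelihood ratio $\log(P(y)/Q(y))$ splits as a \emph{sum} $Z=\sum_{i<k}Z_i$ of independent summands $Z_i:=\log(P_i(y_i)/Q_i(y_i))$ with $y_i\sim P_i$.

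My first step would be a ``cleanup'' reduction to handle $\delta$: for each $i$, the $(\eps,\delta)$-DP condition lets me discard a set $B_i$ of outputs of total $P_i$-mass at most $\delta$ such that, on the complement, the density ratio is pointwise bounded by $e^\eps$. A union bound over $i$ pays $k\delta$ in the final $\delta$ budget and, on the surviving event, I can assume $|Z_i|\le\eps$ deterministically. The second step is the standard ``expectation'' lemma: using $\Pr[\mechanism_i(x)\in T]\le e^\eps\Pr[\mechanism_i(x')\in T]$ and its reverse together with a short calculation (essentially the KL-vs-max-divergence inequality), one shows $\mathbb{E}[Z_i]\le \eps(e^\eps-1)\le 2\eps^2$ for $\eps\le 1$. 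So on the good event, each $Z_i$ is a bounded independent random variable with mean $\le 2\eps^2$.

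The third step applies Azuma--Hoeffding (or Hoeffding's inequality) to $Z=\sum Z_i$: with $|Z_i-\mathbb{E}Z_i|\le 2\eps$ and $k$ terms, we get
\[
\Pr\!\bigl[\,Z>2k\eps^2+t\,\bigr]\;\le\;\exp\!\bigl(-t^2/(8k\eps^2)\bigr).
\]
Choosing $t=O\bigl(\eps\sqrt{k\log(1/\delta')}\bigr)$ makes the right-hand side at most $\delta'$, so under the assumption $k\le 1/\eps^2$ the total bound becomes $Z\le \eps'$ with $\eps'=O\bigl(\eps\sqrt{k\log(1/\delta')}\bigr)$ except with probability $\delta'$. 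The final step is to translate this high-probability bound on the privacy loss back into a DP statement: a standard argument shows that if $\Pr_{y\sim P}[\log(P(y)/Q(y))>\eps']\le\delta'$, then $P(T)\le e^{\eps'}Q(T)+\delta'$ for all $T$. Combining with the $k\delta$ slack from the cleanup step yields the claimed $(\eps',k\delta+\delta')$-DP bound.

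The main obstacle I expect is not the concentration itself but the bookkeeping between the three sources of failure probability: the pointwise-ratio cleanup (contributing $k\delta$), the tail of Azuma (contributing $\delta'$), and the subtlety that the privacy loss must be controlled with respect to the output-side distribution $P$, not $Q$, when invoking the conversion lemma. Getting the constants right in the expectation bound $\mathbb{E}[Z_i]\le O(\eps^2)$---and hence absorbing the mean into the deviation---is the step most easily botched; everything else is essentially a direct application of Hoeffding and the equivalence between pointwise privacy loss tails and approximate DP.
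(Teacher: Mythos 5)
The paper does not prove this theorem; it is imported verbatim from Dwork, Rothblum, and Vadhan \cite{dwork2010boosting}, and your proposal does follow the standard privacy-loss-random-variable route from that paper, with the correct expectation bound, the correct concentration inequality, and the correct conversion from a tail bound on the privacy loss back to $(\eps',\delta')$-DP.

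Your cleanup step, however, has a genuine gap. You assert that $(\eps,\delta)$-DP lets you discard a set $B_i$ of $P_i$-mass at most $\delta$ outside of which the density ratio is pointwise at most $e^\eps$. That is the condition usually called \emph{pointwise} or \emph{probabilistic} DP, and it is strictly stronger than $(\eps,\delta)$-DP: the DP definition only controls the \emph{excess} $P_i(B_i) - e^\eps Q_i(B_i)$, not $P_i(B_i)$ itself. For a concrete counterexample, let $P$ put mass $(0.6,\,0.4)$ and $Q$ put mass $(0.5,\,0.5)$ on two outcomes, and take $e^\eps = 1.1$. Then the bad set $B=\{1\}$ has $P(B)=0.6$, yet $P$ and $Q$ are $(\eps,\delta)$-indistinguishable with $\delta = 0.06$, an order of magnitude smaller. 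So ``discard a set of mass $\delta$'' is simply not available. The fix, and in fact the main technical lemma in \cite{dwork2010boosting}, is a coupling statement: if $P$ and $Q$ are $(\eps,\delta)$-indistinguishable, then there exist modified distributions $P'$ and $Q'$, each within statistical distance $O(\delta)$ of the originals, that are $(\eps,0)$-indistinguishable. One runs the Hoeffding/Azuma argument on the clean pair $(P',Q')$ and pays the statistical-distance slack, which after a union bound over $i$ becomes the $k\delta$ term. As written, your proposal silently replaces this coupling lemma with a false claim, and the argument does not go through for $\delta>0$ without it.
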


Theorem \ref{thm:advanced-comp} is an improvement if $\delta' = 2^{-o(k)}$.
However, despite giving an asymptotically correct upper bound for the global privacy parameter, Theorem \ref{thm:advanced-comp} is not exact. 
Kairouz, Oh, and Viswanath~\cite{kairouz2015composition} shows how to compute the optimal bound for composing $k$ mechanisms where all of them are $(\eps, \delta)$-differentially private.  Murtagh and Vadhan~\cite{murtagh2016complexity} further extends the optimal composition for the more general case where the privacy parameters may differ for each algorithm in the composition:
\salil{replace bound in theorem below with version for $\delta_i,\delta>0$}
\salil{replace citation to Murtagh-Vadhan with journal version, also make sure the bound we are stating is the one from the journal version}
\begin{theorem}[Optimal Composition~\cite{kairouz2015composition,murtagh2016complexity}] 
\label{thm:bound-opt}
If $\mechanism_0, \dots, \mechanism_{k-1}$ are each $(\eps_i, \delta_i)$-differentially private, then given any $\delta_g>0$,  $\comp(\mechanism_0, \dots, \mechanism_{k-1})$ is
$(\eps_g,\delta_g)$-differentially private for the
the least value of $\varepsilon_{g} \geq 0$ such that
$$
\frac{1}{\prod_{i=0}^{k-1}\left(1+\mathrm{e}^{\varepsilon_{i}}\right)} 
\sum_{S \subseteq\{0, \ldots, k-1\}} \max 
\left
\{\mathrm{e}^{\sum_{i \in S} \varepsilon_{i}}-\mathrm{e}^{\varepsilon_{g}} \cdot \mathrm{e}^{\sum_{i \notin S} \varepsilon_{i}}, 0
\right\} 
\leq 1-\frac{1-\delta_{g}}{\prod_{i=0}^{k-1}\left(1-
\delta_{i} \right)}
$$
A special case when all $\mechanism_0, \dots, \mechanism_{k-1}$ are $(\eps, \delta)$-differentially private, then privacy parameter is upper bounded by the least value of $\eps_g \ge 0$ such that
$$
\frac{1}{\left(1+\mathrm{e}^{\varepsilon}\right)^k} 
\sum_{i=0}^k {k \choose i} 
\max 
\left
\{\mathrm{e}^{i \eps }-\mathrm{e}^{\varepsilon_{g}} \cdot \mathrm{e}^{(k-i) \varepsilon}, 0
\right\} 
\leq 1-\frac{1-\delta_g}{(1-\delta)^k}
$$
\end{theorem}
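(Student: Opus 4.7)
The plan is to reduce to a canonical worst-case mechanism and then analyze its composition directly, following the strategy of Kairouz--Oh--Viswanath as extended by Murtagh--Vadhan.

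The first step is to identify, for each pair $(\eps_i, \delta_i)$, a canonical extremal pair of output distributions on a small output space $\{0,1,2,3\}$: outcome $0$ has mass $\delta_i$ under input $x_0$ only, outcome $3$ has mass $\delta_i$ under input $x_1$ only, and the remaining $1-\delta_i$ mass is split between outcomes $1$ and $2$ in the asymmetric ratio $e^{\eps_i}:1$ under $x_0$ and $1:e^{\eps_i}$ under $x_1$. This canonical pair is exactly $(\eps_i, \delta_i)$-DP. The key technical step is a dominance lemma: for any $(\eps_i, \delta_i)$-DP mechanism $\mechanism_i$ and any pair of adjacent datasets $x, x'$, there is a randomized postprocessing $f_i$ such that applying $f_i$ to the canonical pair reproduces $(\mechanism_i(x), \mechanism_i(x'))$. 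By post-processing, it then suffices to prove the composition bound for the canonical mechanisms.

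The second step is to compute the privacy loss of the composed canonical mechanism directly. Outputs are tuples in $\{0,1,2,3\}^k$ with independent coordinates. To find the least $\eps_g$ making the composition $(\eps_g, \delta_g)$-DP, one maximizes $\Pr[\mechanism(x) \in T] - e^{\eps_g}\Pr[\mechanism(x') \in T]$ over $T$; the optimum takes $T$ to be exactly the set of tuples whose likelihood ratio exceeds $e^{\eps_g}$. Conditioning on the non-leak coordinates (those in $\{1,2\}$), these tuples are parametrized by subsets $S \subseteq \{0,\ldots,k-1\}$ with $e^{\sum_{i\in S}\eps_i - \sum_{i\notin S}\eps_i} > e^{\eps_g}$, each contributing mass proportional to $e^{\sum_{i\in S}\eps_i}/\prod_i(1+e^{\eps_i})$. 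Separating out the leak outcomes (which together contribute exactly $1-\prod_i(1-\delta_i)$ of mass that is always absorbed into $\delta_g$) and rearranging yields the stated inequality, with the factor $(1-\delta_g)/\prod_i(1-\delta_i)$ on the right capturing the remaining slack after this absorption.

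The main obstacle is the dominance lemma. A clean approach partitions the output space of $\mechanism_i$ by the sign of $\log(\Pr[\mechanism_i(x)=y]/\Pr[\mechanism_i(x')=y]) \mp \eps_i$, uses the $(\eps_i, \delta_i)$-DP condition on each side to bound the mass of the two extremal regions by $\delta_i$, and then constructs $f_i$ by routing the canonical $\{0,3\}$ outcomes to the corresponding extremal regions and the canonical $\{1,2\}$ outcomes to an appropriate mixture over the middle region. Handling the case where $\delta_i$-mass is present on both sides without over- or under-counting is the most delicate part of the argument; once this is done, the summation in the second step is essentially mechanical, and the special symmetric case in the theorem follows by substituting $\eps_i = \eps$, $\delta_i = \delta$ and using that the sum over $S$ depends only on $|S|$.
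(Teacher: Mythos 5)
The paper does not prove Theorem~\ref{thm:bound-opt} itself---it is imported from Kairouz--Oh--Viswanath and Murtagh--Vadhan and used as a black box---so there is no paper proof to compare against; but your sketch is a faithful reconstruction of the cited argument, which is also the argument the paper's Section~5 restates and builds on. Your canonical four-outcome pair is exactly the paper's $\RR_{(\eps,\delta)}$, your dominance lemma is the paper's Theorem~\ref{thm:noninterac}, and the subset-sum step is the right hockey-stick computation once you condition on the no-leak event: the right-hand side $1-\frac{1-\delta_g}{\prod_i(1-\delta_i)}$ is precisely the conditional budget $\frac{\delta_g-\Delta}{1-\Delta}$ after the leak mass $\Delta=1-\prod_i(1-\delta_i)$ is absorbed into $\delta_g$. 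One imprecision worth flagging in your sketch of the dominance lemma: the $(\eps_i,\delta_i)$-DP condition controls the hockey-stick \emph{excess} $\Pr[\mechanism_i(x)\in A]-e^{\eps_i}\Pr[\mechanism_i(x')\in A]\le\delta_i$ on the over-threshold region $A$, not the mass $\Pr[\mechanism_i(x)\in A]$ itself, which can be much larger; so the simulating map $f_i$ cannot simply send all $\delta_i$ of the one-sided leak outcome onto $A$, as your phrase ``bound the mass of the two extremal regions by $\delta_i$'' suggests. When the two excesses are strictly below $\delta_i$, Murtagh--Vadhan redistribute the surplus across the middle outcomes so that the marginals of $f_i(\RR_{(\eps_i,\delta_i)}(b))$ match $\mechanism_i(x_b)$ exactly for both $b$. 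Since you already flag this as the delicate step, this is a matter of phrasing rather than a missing idea; the overall route is the same.
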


\subsection{Interactive Differential Privacy}

The standard treatment of differential privacy, as captured by Definition~\ref{def:DP}, 
refers to a {\em noninteractive} algorithm $\mechanism$ that takes a dataset $x$ as input and
produces a statistical release $\mechanism(x)$, or a batch by taking $\mechanism=\comp(\mechanism_0,\ldots,\mechanism_{k-1})$.
However, in many of the motivating applications
of differential privacy, we don't want to perform all of our releases in one shot, but rather allow analysts to make adaptive queries to a dataset.  Thus, we should view the mechanism $\mechanism$ as a party in a two-party protocol, interacting with a (possibly adversarial) analyst.  

To formalize the concept of interactive DP, we recall one of the standard formalizations of an interactive protocol between two parties $A$ and $B$. We do this by viewing each party as a function, taking its private input, all messages it has received, and the party's random coins, to the party's next message to be sent out. 

\begin{definition}[Interactive protocols]
An \emph{interactive protocol} $(A, B)$ is any pair of functions. The interaction between $A$ with input $x_A$ and $B$ with input $x_B$ is the following random process (denoted $(A(x_A), B(x_B))$):
\begin{enumerate}
    \item Uniformly choose random coins $r_{A}$ and $r_{B}$ (binary strings) for $A$ and $B$, respectively.
    \item Repeat the following for $i=0, 1, \ldots$: 
    \begin{enumerate}
        \item If $i$ is even, let $m_{i}=A\left(x_A, m_{1}, m_3, \ldots, m_{i-1} ; r_{A} \right)$.
        \item If $i$ is odd, let $m_{i}=B\left(x_B, m_{0}, m_2, \ldots, m_{i-1} ; r_{B} \right)$.
        \item If $m_{i-1} = \halt$, then exit loop.
    \end{enumerate}
\end{enumerate}
\end{definition}

We further define the view of a party in an interactive protocol to capture everything the party ``sees'' during the execution: 
\begin{definition}[View of a party in an interactive protocol] \label{def:view}
Let $(A, B)$ be an interactive protocol. Let $r_{A}$ and $r_{B}$ be the random coins for $A$ and $B$, respectively. $A$'s view of $(A(x_A; r_A), B(x_B; r_B))$ is the tuple $\View_A \langle A(x_A; r_A), B(x_B; r_B) \rangle = (r_A, x_A, m_{1}, m_3, \ldots )$ consisting of all the messages received by $A$ in the execution of the protocol together with the private input $x_A$ and random coins $r_{A}$. 
If we drop the random coins $r_A$ and/or $r_B$, $\View_A \langle A(x_A), B(x_B) \rangle$ becomes a random variable. $B$'s view of $(A(x_A), B(x_B))$ is defined symmetrically. 
\end{definition}
In our case, $A$ is the adversary and $B$ is the mechanism whose input is usually a database $x$. Since $A$ does not have an input in our case, we will denote the interactive protocol as $(A, B(x))$ for the ease of notation.  Since we will only be interested in $A$'s view and $A$ does not have an input, we will drop the subscript and write $A$'s view as $\view{A}{B(x)}$. 

Now we are ready to define the interactive differential privacy
as a type of interactive protocol between an adversary (without any computational limitations) and an interactive mechanism of special properties. 

\begin{definition}[Interactive Differential Privacy] 
\label{def:interactive-dp}
A randomized algorithm $\mechanism$ is an {\em $(\eps, \delta)$-differentially private interactive mechanism} if for every pair of adjacent datasets $x, x'\in \MS(\cX)$, for every adversary algorithm $\analyst$ we have:
\begin{equation} \label{req:interactive-approxdp}
\begin{split}
&\forall T \subseteq \range \left( \view{\analyst}{ \mechanism(\cdot) } \right),\\
&\Pr \left[ \view{\analyst}{\mechanism(x)}  \in T \right] \le e^\eps \Pr \left[ \view{\analyst}{\mechanism(x')} \in T \right] + \delta
\end{split}
\end{equation}
where the randomness is over the coin flips of both the algorithm $\mechanism$ and the adversary $\analyst$. 
\end{definition}
In addition to being the ``right'' modelling for many applications of differential privacy,
interactive differential privacy also captures the full power of fundamental DP mechanisms such as the  Sparse Vector Technique \cite{dwork2009complexity,roth2010interactive} and Private Multiplicative Weights \cite{hardt2010multiplicative}, which are in turn useful in the design of other DP algorithms (which can use these mechanisms as subroutines and issue adaptive queries to them).  Interactive DP was also chosen as the basic abstraction in the programming framework for the new open-source software project OpenDP~\cite{gaboardi2020programming}, which was our motivation for this research.

Despite being such a natural and useful notion, interactive DP has not been systematically studied in its own right.  It has been implicitly studied in the context of distributed forms of DP, starting with \cite{beimel2008distributed}, where the sensitive dataset is split amongst several parties, who execute a multiparty protocol to estimate a joint function of their data, while each party ensures that their portion of the dataset has the protections of DP against the other parties.  Indeed, in an $m$-party protocol, requiring DP against malicious coalitions of size $m-1$ is equivalent to requiring that each party's strategy is an interactive DP mechanism in the sense of
Definition~\ref{def:interactive-dp}.  An extreme case of this is the {\em local model} of DP, where each party holds a single data item in $\cX$ representing data about themselves~\cite{kasiviswanathan2011can}. There been extensive research about the power of interactivity in local DP; see \cite{chen2020distributed} and the references therein. 
In contrast to these distributed models, in Definition~\ref{def:interactive-dp} we are concerned with the {\em centralized DP} scenario where only one party ($\mechanism$) holds sensitive data, and how an adversarial data analyst ($\analyst$) may exploit adaptive queries to extract information about the data subjects.  

Some of the aforementioned composition theorems for noninteractive DP, such 
as in \cite{dwork2010boosting,murtagh2016complexity}, are framed in terms of an adaptive ``composition game'' where an adversary can adaptively select the mechanisms $\mechanism_0,\ldots,\mechanism_{k-1}$, and thus the resulting composition $\comp(\mechanism_0,\ldots,\mechanism_{k-1})$ can be viewed as an interactive mechanism, but the results are not framed in terms of a general definition of Interactive DP.  In particular, the mechanisms $\mechanism_i$ being composed are restricted to be noninteractive in the statements and proofs of these theorems.

\subsection{Our Contributions}

In this paper, we initiate a study of the composition of interactive DP mechanisms.  Like in the context of cryptographic protocols, there are several different forms of composition we can consider.  The simplest is {\em sequential composition}, where all of the queries to $\mechanism_{i-1}$ must be completed before any queries are issued to $\mechanism_i$.  It is straightforward to extend the proofs of the noninteractive DP composition theorems to handle sequential composition of interactive DP mechanisms; in particular the Optimal Composition Theorem (Theorem~\ref{thm:bound-opt}) extends to this case.  (Details omitted.)

Thus, we turn to {\em concurrent composition}, where an adversary can arbitrarily interleave its queries to the $k$ mechanisms.  Although the mechanisms use independent randomness, the adversary may create correlations between the executions by coordinating its actions; in particular, its
queries in one execution may also depend on messages it received in other executions.  Concurrent composability is important for the deployment of interactive DP in practice, as one or more organizations may set up multiple DP query systems on datasets that refer to some of the same individuals, and we would not want the privacy of those individuals to be violated by an adversary that can concurrently access those systems.  Concurrent composability may also be useful in the design of DP algorithms; for example, one might design a DP machine learning algorithm that uses adaptive and interleaved queries to two instantiations of an interactive DP mechanism like the Sparse Vector Technique~\cite{dwork2009complexity,roth2010interactive}.

Although the concurrent composition for the case of differential privacy has not been explored before, it has been studied extensively for many primitives in cryptography, and it is often much more subtle than the sequential composition.  (See the surveys \cite{canetti1996adaptively,goldreich2019providing}.)
\salil{Goldreich and Canetti are from https://ebooks.iospress.nl/volume/secure-multi-party-computation and Pass is from https://dl.acm.org/doi/book/10.1145/3335741}
For example, standard zero-knowledge protocols are no longer zero-knowledge when a single prover is involved in multiple, simultaneous zero-knowledge proofs with one or multiple verifiers \cite{feige1989zero,goldreich1996composition}. \salil{new citations}

\salil{for me: possibly add mention of adaptive epsilons analogy}

We use $\concomp(\mechanism_0, \dots, \mechanism_{k-1})$ to denote the concurrent composition of interactive mechanisms $\mechanism_0, \dots, \mechanism_{k-1}$.  (See Section~\ref{sec:defs} for a formal definition.)






Our first result is roughly an analogue of the Basic Composition Theorem.

\begin{restatable}{theorem}{concompbasicapprox}
\label{thm:concomp-basic-approx}
If interactive mechanisms $\mechanism_0, \dots, \mechanism_{k-1}$ are each $(\eps, \delta)$-differentially private, then their concurrent composition $\concomp(\mechanism_0, \ldots, \mechanism_{k-1})$ is
$\left(k\cdot\eps, \frac{e^{k\eps}-1}{e^\eps-1}\cdot \delta\right)$-differentially private. 

More generally, if interactive mechanism $\mechanism_i$ is $(\eps_i, \delta_i)$-differentially private for $i=0,\ldots,k-1$, then the concurrent composition $\concomp(\mechanism_0, \ldots, \mechanism_{k-1})$ is
$\left(\eps_g, \delta_g \right)$-differentially private, where 
\begin{eqnarray*}
\eps_g &=& \sum_{i=0}^{k-1} \eps_i,\text{ and}\\
\delta_g &=& \sum_{i=0}^{k-1} e^{\sum_{j=0}^{i-1}\eps_j}\cdot \delta_{i} \leq e^{\eps_g}\cdot \sum_{i=0}^{k-1} \delta_{i}.
\end{eqnarray*}
\end{restatable}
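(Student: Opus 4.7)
The plan is a hybrid argument that reduces the concurrent composition of $k$ interactive mechanisms to the interactive DP property of each individual mechanism. Fix adjacent databases $x, x' \in \MS(\cX)$ and an arbitrary adversary $\analyst$ against $\concomp(\mechanism_0, \ldots, \mechanism_{k-1})$. For each $j \in \{0, 1, \ldots, k\}$, define the hybrid distribution $H_j$ to be the distribution of $\view{\analyst}{\concomp(\mechanism_0(y_0), \ldots, \mechanism_{k-1}(y_{k-1}))}$ where $y_i = x$ for $i < j$ and $y_i = x'$ for $i \geq j$. Thus $H_k$ is the adversary's view when the composition is run on $x$, and $H_0$ is its view when it is run on $x'$; consecutive hybrids $H_j$ and $H_{j+1}$ differ only in whether $\mechanism_j$ receives $x$ or $x'$.

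The main step is to show that $H_j$ and $H_{j+1}$ are $(\eps_j, \delta_j)$-indistinguishable. For this, I construct a single-mechanism adversary $\analyst_j$ that interacts with $\mechanism_j$ externally while internally simulating the other $k-1$ mechanisms on their hardcoded inputs ($x$ for indices below $j$ and $x'$ for indices above $j$) using fresh independent randomness. $\analyst_j$ runs $\analyst$ step by step: whenever $\analyst$ issues a query to some $\mechanism_i$ with $i \neq j$, $\analyst_j$ computes the response internally from the simulated copy; whenever $\analyst$ issues a query to $\mechanism_j$, $\analyst_j$ forwards it to the external mechanism and returns the response to $\analyst$. By construction, the resulting view of $\analyst$ inside the simulation coincides with $H_{j+1}$ when the external $\mechanism_j$ runs on $x$, and with $H_j$ when it runs on $x'$. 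Applying the $(\eps_j, \delta_j)$-interactive DP guarantee of $\mechanism_j$ to $\analyst_j$ then yields $H_j \approx_{\eps_j, \delta_j} H_{j+1}$.

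Finally I chain the $k$ pairwise indistinguishability statements via the standard triangle inequality: if $P \approx_{\alpha, \beta} Q$ and $Q \approx_{\alpha', \beta'} R$, then for any measurable $T$,
\[
\Pr[P \in T] \leq e^{\alpha}\Pr[Q \in T] + \beta \leq e^{\alpha + \alpha'}\Pr[R \in T] + e^{\alpha}\beta' + \beta.
\]
Iterating this from $H_0$ to $H_k$, each step contributing $\eps_j$ to the exponent and scaling the remaining $\delta$-tail by $e^{\eps_j}$, gives $\Pr[H_0 \in T] \leq e^{\eps_g}\Pr[H_k \in T] + \delta_g$ with $\eps_g = \sum_i \eps_i$ and $\delta_g = \sum_{i=0}^{k-1} e^{\eps_0 + \cdots + \eps_{i-1}}\delta_i$, matching the stated bound exactly. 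The upper bound $\delta_g \leq e^{\eps_g}\sum_i \delta_i$ is immediate since each partial sum $\sum_{j<i}\eps_j \leq \eps_g$. The reverse inequality (needed to conclude $(\eps_g, \delta_g)$-DP) follows by rerunning the hybrid argument with the roles of $x$ and $x'$ swapped.

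The main subtlety, and the source of genuine concern in concurrent settings by analogy with concurrent composition of cryptographic protocols, is ensuring that the simulation by $\analyst_j$ faithfully reproduces the concurrent interleaving of queries and responses. This works here because each $\mechanism_i$ uses independent randomness and only the adversary schedules messages, so $\analyst_j$ never needs to relay information between the simulated mechanisms and $\mechanism_j$. The bulk of the technical work will be formalizing this reduction within the interactive-protocol framework of Definition~\ref{def:view} and carefully verifying the distributional identities $\view{\analyst_j}{\mechanism_j(x)} = H_{j+1}$ and $\view{\analyst_j}{\mechanism_j(x')} = H_j$, which is precisely the point where the definition of $\concomp$ must be unpacked.
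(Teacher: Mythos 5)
Your proof is correct and takes essentially the same approach as the paper: a hybrid argument swapping one mechanism's input at a time, where each neighboring pair of hybrids is related by folding $\analyst$ and the other $k-1$ internally-simulated mechanisms into a single adversary against $\mechanism_j$, and the $k$ resulting $(\eps_j,\delta_j)$-indistinguishability bounds are chained via the asymmetric triangle inequality to obtain $\delta_g = \sum_i e^{\sum_{j<i}\eps_j}\delta_i$. The only point the paper is more explicit about is that $H_{j+1}$ is recovered from $\view{\analyst_j}{\mechanism_j(x)}$ by a deterministic post-processing (reconstructing the full concurrent transcript from $\analyst_j$'s coins) rather than being literally that view, which is exactly the formalization step you flag at the end.
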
 
\salil{simplified statement of thm - it was too hard to parse with the permutation - and moved the corollary into the thm statement}
\salil{make sure that these edits didn't cause problems in the technical section}

Just like in the Basic Composition Theorem for noninteractive DP (Theorem~\ref{thm:basic-comp}),
the privacy-loss parameters $\eps_i$ just sum up.  However, the bound on $\delta_g$ is worse by a factor of at most $e^{\eps_g}$.  In the typical
setting where we want to enforce a global privacy loss of $\eps_g=O(1)$, this is only a constant-factor loss compared to the Basic Composition Theorem, but that constant can be important in practice.
Note that expression for $\delta_g$ depends on the ordering of the $k$ mechanisms $\mechanism_0,\ldots,\mechanism_{k-1}$, so one can optimize it further by taking a permutation of the mechanisms that minimizes $\delta_g$.

The proof of Theorem~\ref{thm:concomp-basic-approx} is by a standard hybrid argument.  We compare the distributions of 
$H_0=\view{\analyst}{ \concomp(\mechanism_0(x),\mechanism_1(x),\ldots,\mechanism_{k-1}(x)) }$
and $H_k=\view{\analyst}{ \concomp(\mechanism_0(x'),\mechanism_1(x'),\ldots,\mechanism_{k-1}(x')) }$
on adjacent datasets $x,x'$ by changing $x$ to $x'$ for one mechanism at a time, so that $H_{i-1}$ and $H_i$ differ only on the input to $\mechanism_{i-1}$.  
To relate $H_{i-1}$ and $H_i$ we consider an adversary strategy $\adversary_i$ that emulates $\adversary$'s interaction with $\mechanism_{i-1}$, while internally simulating all of the other $\mechanism_j$'s.  Applying a ``triangle  inequality'' to the distance notion given in Requirement (\ref{req:interactive-approxdp}) yields the
result.  This proof is very similar to the proof of the ``group privacy'' property of (noninteractive) differential privacy, where $(\eps,\delta)$-DP for datasets that differ on one record implies $\left(k\cdot\eps, \frac{e^{k\eps}-1}{e^\eps-1}\cdot \delta\right)$ for datasets that differ on $k$ records.

Next we show that the Advanced and Optimal Composition Theorems (Theorems~\ref{thm:advanced-comp} and \ref{thm:bound-opt}) for noninteractive DP extend to interactive DP, provided that the mechanisms
$\mechanism_i$ being composed satisfy pure DP (i.e. $\delta_i=0$).  Note that the final composed mechanism $\concomp(\mechanism_0,\ldots,\mechanism_{k-1})$ can be approximate DP, by taking $\delta_g=\delta'>0$, and thereby allowing for a privacy loss $\eps_g$ that grows linearly in $\sqrt{k}$ rather than $k$.

We do this by extending the main proof technique of \cite{kairouz2015composition,murtagh2016complexity} to interactive DP mechanisms.  Specifically, we reduce the analysis of interactive $(\eps,0)$-DP mechanisms to that of analyzing the following simple ``randomized response'' mechanism:

\begin{definition}[\cite{warner1965randomized,dwork2006calibrating}]
For $\eps>0$, define a randomized noninteractive algorithm 
$\RR_{\eps}:\{0,1\} \rightarrow \{0,1\}$ as follows:
$$\RR_\eps(b) = 
\begin{cases} b & \mbox{w.p. $\frac{e^\eps}{1+e^\eps}$}\\
\neg b & \mbox{w.p. $\frac{1}{1+e^\eps}$.}
\end{cases}$$
\end{definition}

Note that $\RR_{\eps}$ is a noninteractive $(\eps,0)$-DP mechanism.  We show that every interactive $(\eps,0)$-DP mechanism can be, in some sense, simulated from $\RR_\eps$:
\begin{theorem}
\label{thm:reduction-pure}
Suppose that $\mechanism$ is an interactive $(\eps, 0)$-differentially private mechanism. Then for every pair of adjacent datasets $x_0, x_1$ there exists an interactive mechanism $T$ s.t. for every adversary $\analyst$ and every $b \in \{0, 1\}$ we have 
\begin{equation}
\View(\analyst, \mechanism(x_b))
    \equiv \View(\analyst, T(\RR_{\eps}(b))) \nonumber
\end{equation}
\end{theorem}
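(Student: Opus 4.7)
The plan is to construct $T$ explicitly as an ``interactive inverse'' of $\RR_\eps$, and then verify correctness by a telescoping calculation. Fix adjacent datasets $x_0, x_1$. Given a partial transcript $h_k = (q_0, a_0, q_1, a_1, \ldots, q_{k-1}, a_{k-1}, q_k)$ ending with a pending adversary query, define
$$\pi_b(h_k) \;=\; \Pr\bigl[\,\mechanism(x_b) \text{ produces responses } a_0, \ldots, a_{k-1} \text{ on queries } q_0, \ldots, q_{k-1}\,\bigr],$$
a quantity depending only on $\mechanism$ and $x_b$. On history $h_k$, the mechanism $T(\tilde b)$ responds with $a_k$ sampled from
$$\tau_{\tilde b}(a_k \mid h_k) \;=\; \frac{e^\eps \, \pi_{\tilde b}(h_k, a_k) - \pi_{1-\tilde b}(h_k, a_k)}{e^\eps \, \pi_{\tilde b}(h_k) - \pi_{1-\tilde b}(h_k)},$$
where $(h_k, a_k)$ denotes the extension of $h_k$ by the response $a_k$.

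First I would check that $\tau_{\tilde b}(\cdot \mid h_k)$ is a bona fide probability distribution. Both the numerator and the denominator are nonnegative by pure DP applied to the deterministic adversary that replays the queries in $h_k$ (respectively $(h_k, a_k)$) and then halts: its view distribution satisfies $\pi_{1-\tilde b} \leq e^\eps\, \pi_{\tilde b}$ on both arguments. The total mass is $1$ because $\sum_{a_k} \pi_b(h_k, a_k) = \pi_b(h_k)$ for each $b$.

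Next I would compute the joint distribution of $T(\tilde b)$'s responses in interaction with an arbitrary adversary $\analyst$. Fixing $\analyst$'s coins $r_\analyst$ makes its queries deterministic functions of prior responses; appending the next query to a history does not change $\pi_b$ (which only records mechanism-response probabilities), so successive denominators cancel successive numerators and the product of one-step conditionals telescopes:
$$\Pr\!\left[\,T(\tilde b) \text{ outputs } a_0, \ldots, a_{K-1} \,\bigm|\, r_\analyst\,\right] \;=\; \prod_{k=0}^{K-1} \tau_{\tilde b}(a_k \mid h_k) \;=\; \frac{e^\eps\, \pi_{\tilde b}(h_K) - \pi_{1-\tilde b}(h_K)}{e^\eps - 1}.$$
Writing $P_b$ for the distribution of $\View(\analyst, \mechanism(x_b))$ conditioned on $r_\analyst$ (so $P_b(h_K) = \pi_b(h_K)$), this identifies $\View(\analyst, T(\tilde b)) \mid r_\analyst$ with $R_{\tilde b} := (e^\eps P_{\tilde b} - P_{1-\tilde b})/(e^\eps - 1)$.

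Finally, a short arithmetic check with $\alpha = e^\eps/(1+e^\eps)$ gives $\alpha R_b + (1 - \alpha) R_{1-b} = P_b$. Since $\RR_\eps(b) = b$ with probability $\alpha$ and $1-b$ otherwise, this identifies $\View(\analyst, T(\RR_\eps(b))) \mid r_\analyst$ with $P_b = \View(\analyst, \mechanism(x_b)) \mid r_\analyst$, and averaging over $r_\analyst$ concludes. The main obstacle is arranging the definition of $\pi_b$ so the telescoping is clean, and handling the edge case where $\pi_b(h_k) = 0$ for some $b$; the latter can be resolved by an arbitrary default choice of $\tau_{\tilde b}$, since such histories arise with probability zero in the overall calculation. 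This proof mirrors the noninteractive reduction to randomized response of \cite{kairouz2015composition}, but must be carried out at the level of transcripts rather than single outputs.
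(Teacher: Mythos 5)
Your construction of $T$ is the same as the paper's; the denominators differ only notationally, since your $e^\eps\pi_{\tilde b}(h_k)-\pi_{1-\tilde b}(h_k)$ equals the paper's $(e^\eps-1)\Pr\bigl[\vec{T}(\tilde b,\bm{q}^{(k)})=(a_0,\ldots,a_{k-1})\bigr]$ by exactly the telescoping you make explicit. The verification steps (nonnegativity and normalization via pure DP, then the randomized-response mixture identity) are likewise the paper's argument; the only addition is your explicit treatment of the zero-denominator histories, which the paper leaves implicit.
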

Here $T$ is an interactive mechanism that depends on $\mechanism$ as well as a fixed pair of adjacent datasets $x_0$ and $x_1$.  It receives a single bit as an output of $\RR_{\eps}(b)$, and then interacts with the adversary $\analyst$ just like $\mechanism$ would.  Kairouz, Oh, and Viswanath~\cite{kairouz2015composition} proved Theorem~\ref{thm:reduction-pure} result for the case
that $\mechanism$ and $T$ are noninteractive.  The interactive case is more involved because we need a single $T$ that works for all adversary strategies $\analyst$.  (If we allow $T$ to depend on the adversary strategy $\analyst$, then the result would readily follow from that of \cite{kairouz2015composition}, but this would not suffice for our application to concurrent composition.)

Given the Theorem \ref{thm:reduction-pure}, to analyze $\concomp(\mechanism_0(x_b),\ldots,\mechanism_{k-1}(x_b))$ on $b=0$ vs. $b=1$, it suffices to analyze $\concomp(T_0(\RR_{\eps_0}(b)),\ldots,T_{k-1}(\RR_{\eps_{k-1}}(b)))$.  An adversary's view interacting with the latter concurrent composition can be simulated entirely from the output of $\comp(\RR_{\eps_0}(b),\ldots,\RR_{\eps_{k-1}}(b))$, which is the composition of entirely noninteractive mechanisms.  Thus, we conclude:

\begin{corollary}
The Advanced and Optimal Composition Theorems (Theorems~\ref{thm:advanced-comp} and \ref{thm:bound-opt}) extend to the
concurrent composition of $(\eps_i,\delta_i)$-interactive DP mechanisms $\mechanism_i$ provided that $\delta_0=\delta_1=\cdots=\delta_{k-1}=0$.
\label{cor:extension}
\end{corollary}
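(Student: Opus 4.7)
The plan is to combine the reduction from Theorem~\ref{thm:reduction-pure} with a hybrid argument and the post-processing property of differential privacy. Fix adjacent datasets $x_0, x_1 \in \MS(\cX)$ and an arbitrary adversary $\analyst$. For each $i$, Theorem~\ref{thm:reduction-pure} applied to $\mechanism_i$ and to the pair $(x_0,x_1)$ yields an interactive mechanism $T_i$ such that $\view{\analyst'}{\mechanism_i(x_b)} \equiv \view{\analyst'}{T_i(\RR_{\eps_i}(b))}$ for every adversary $\analyst'$ and every $b \in \{0,1\}$. The universal quantifier over $\analyst'$ is exactly what will enable us to perform the substitution inside a concurrent execution that involves many other interactive parties.

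I would then set up a hybrid sequence: for $j=0,1,\ldots,k$ and $b\in\{0,1\}$, define $H_j^{(b)} = \view{\analyst}{\concomp(T_0(\RR_{\eps_0}(b)),\ldots,T_{j-1}(\RR_{\eps_{j-1}}(b)),\mechanism_j(x_b),\ldots,\mechanism_{k-1}(x_b))}$. To show $H_j^{(b)} \equiv H_{j+1}^{(b)}$ for each fixed $b$, I would bundle the adversary $\analyst$ together with the other $k-1$ interactive parties into a single meta-adversary $\analyst_j^{(b)}$ that maintains the concurrent scheduling internally and externally exchanges messages only with the $j$-th party. Applying Theorem~\ref{thm:reduction-pure} to $\analyst_j^{(b)}$ gives $\view{\analyst_j^{(b)}}{\mechanism_j(x_b)} \equiv \view{\analyst_j^{(b)}}{T_j(\RR_{\eps_j}(b))}$; since $\analyst$'s view is a deterministic function of $\analyst_j^{(b)}$'s view, this transfers to $H_j^{(b)} \equiv H_{j+1}^{(b)}$, and chaining the hybrids yields $H_0^{(b)} \equiv H_k^{(b)}$.

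Next, I would observe that $H_k^{(b)}$ can be realized by first drawing $\bigl(\RR_{\eps_0}(b), \ldots, \RR_{\eps_{k-1}}(b)\bigr)$ and then running the interaction between $\analyst$ and the $T_i$'s using fresh independent randomness, so $H_k^{(b)}$ is a (randomized) post-processing of the noninteractive composition $\comp(\RR_{\eps_0}(b),\ldots,\RR_{\eps_{k-1}}(b))$. Since each $\RR_{\eps_i}$ is noninteractive and $(\eps_i,0)$-DP, Theorem~\ref{thm:bound-opt} (resp.\ Theorem~\ref{thm:advanced-comp}) bounds the indistinguishability between $b=0$ and $b=1$ by the optimal (resp.\ advanced) composition parameters $(\eps_g,\delta_g)$, and post-processing preserves this bound. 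Combined with $H_0^{(b)} \equiv H_k^{(b)}$, this gives the desired $(\eps_g, \delta_g)$-indistinguishability of the adversary's views of $\concomp(\mechanism_0(x_0),\ldots,\mechanism_{k-1}(x_0))$ and $\concomp(\mechanism_0(x_1),\ldots,\mechanism_{k-1}(x_1))$.

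The main obstacle I expect is formalizing the hybrid step cleanly: the meta-adversary $\analyst_j^{(b)}$ must faithfully emulate an arbitrary concurrent schedule within the strict two-party next-message-function framework of interactive protocols, and it must be verified that the marginal distribution of the original $\analyst$'s view can be extracted from $\analyst_j^{(b)}$'s view so that the equivalence granted by Theorem~\ref{thm:reduction-pure} transfers to $H_j^{(b)}$ and $H_{j+1}^{(b)}$. Once this bookkeeping is handled, all quantitative losses are outsourced to Theorems~\ref{thm:bound-opt} and~\ref{thm:advanced-comp} and no further calculation is needed.
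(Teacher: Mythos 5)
Your proof is correct and follows essentially the same route as the paper: reduce each $\mechanism_i$ to an interactive post-processing $T_i$ of $\RR_{\eps_i}$ via Theorem~\ref{thm:reduction-pure}, observe that the adversary's view of $\concomp(T_0(\RR_{\eps_0}(b)),\ldots,T_{k-1}(\RR_{\eps_{k-1}}(b)))$ is a post-processing of $\comp(\RR_{\eps_0}(b),\ldots,\RR_{\eps_{k-1}}(b))$, and invoke the noninteractive composition theorems together with closure under post-processing. In fact, your explicit hybrid argument with the meta-adversaries $\analyst_j^{(b)}$ fills in a step that the paper's own lemma leaves implicit when passing from the per-mechanism equivalences $\view{\analyst}{\mechanism_i(x_b)}\equiv\view{\analyst}{T_i(\RR_{\eps_i}(b))}$ to the equivalence of the full concurrent compositions.
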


We leave the question of whether or not the Advanced and/or Optimal Composition Theorems extend to the concurrent composition of approximate DP mechanisms (with $\delta_i>0$) for future work. The Optimal Composition Theorem for noninteractive approximate DP (Theorem~\ref{thm:bound-opt}) is also proven by showing that any noninteractive $(\eps,\delta)$-DP mechanism can be simulated by an approximate-DP generalization of randomized response, $\RR_{(\eps,\delta)}$, analogously to Theorem~\ref{thm:reduction-pure}.
Based on computer experiments described in Section~\ref{sec:experiments}, we conjecture that such a simulation also exists for every approximate DP interactive mechanism, and the Optimal Composition Theorem should extend at least to 2-round interactive mechanisms in which all messages are 1 bit long.

Another interesting question for future work is analyzing concurrent composition for variants of differential privacy, such as 
Concentrated DP~\cite{dwork2016concentrated,bun2016concentrated,bun2018composable}, R\'enyi DP~\cite{mironov2017renyi}, and Gaussian DP~\cite{dong2019gaussian}.  
Some of these notions require bounds on  R\'enyi divergences, e.g. that 
$$D_\alpha(\view{\analyst}{\mechanism(x)}||\view{\analyst}{\mechanism(x')})\leq \rho,$$
for adjacent datasets $x,x'$ and certain pairs $(\alpha,\rho)$.
Here sequential composition can be argued using a chain rule for R{\'e}nyi divergence:
\begin{equation}\label{eqn:chainrule}
D_\alpha((Y,Z) || (Y',Z')) \leq D_\alpha(Y||Y') + \sup_y D_\alpha(Z|_{Y=y} || Z'|_{Y'=y}).
\end{equation}
Taking $Y$ to be the view of the analyst interacting with $\mechanism_0(x)$, $Z$ to be the view of the analyst in a subsequent interaction with $\mechanism_1(x)$, and $Y'$ and $Z'$ to be analogously defined with respect to an adjacent dataset $x'$, we obtain the usual composition bound of $\rho_0+\rho_1$ on the overall R{\'e}nyi divergence of order $\alpha$, where $\rho_0$ and $\rho_1$ are the privacy-loss parameters of the individual mechanisms.  However, this argument fails for concurrent DP, since we can no longer assert the privacy properties of $\mechanism_1$ conditioned on any possible value $y$ of the adversary's view of the interaction with $\mechanism_0$. Unfortunately, the Chain Rule (\ref{eqn:chainrule}) does not hold if we replace the supremum with an expectation, so a new proof strategy is needed (if the composition theorem remains true).

\salil{Tianhao, please look at literature on interactive local DP and multiparty DP to see
if there are any results related to the post-processing conjecture in those papers.  Some papers to start with are: ``Secure multi-party DP'' by KOV, ``Extremal mechanisms for local DP'' by KOV, and ``The Role of Interactivity in Local Differential Privacy'' by Joseph et al.}
\tianhao{Extremal mechanisms for local DP by KOV \cite{kairouz2014extremal} mentions that any $\eps$-locally differentially private mechanism can be simulated from the output of a binary mechanism for sufficiently small $\eps$ (defined in sec 3.2)}
\salil{thanks - I think they are still restricting to noninteractive local DP, so no need for us to cite.  but do remind me (after the deadline) to ask around whether anything like this was known for interactive local DP.}
\salil{for me: mention open problems about concurrent composition of other forms of DP and the challenges - Renyi divergences don't satisfy the triangle inequality.}
\salil{for me: discuss experimental results in intro}

\section{Definitions and Basic Properties} \label{sec:defs}

The formal definition of the concurrent composition of interactive protocols is provided here. 

\begin{definition}[Concurrent Composition of Interactive Protocols]
Let $\mechanism_0, \dots, \mechanism_{k-1}$ be interactive mechanisms. We say $\mechanism = \concomp(\mechanism_0, \dots, \mechanism_{k-1})$ is the \emph{concurrent composition} of mechanisms $\mechanism_0, \dots, \mechanism_{k-1}$ if $\mechanism$ runs as follows:
\begin{enumerate}
    \item Random coin tosses for $\mechanism$ consist of $r=(r_0, \dots, r_{k-1})$ where $r_j$ are random coin tosses for $\mechanism_j$. 
    \item Inputs for $\mechanism$ consists of $x=(x_0, \dots, x_{k-1})$ where $x_j$ is private input for $\mechanism_j$. 
    \item $\mechanism(x, m_0, \dots, m_{i-1}; r)$ is defined as follows: 
    \begin{enumerate}
        \item Parse $m_{i-1}$ as $(q, j)$ where $q$ is a query and $j \in [k]$. If $m_{i-1}$ cannot be parsed correctly, output \halt. 
        \item Extract history $(m_0^j, \dots, m_{t-1}^j)$ from $(m_0, \dots, m_{i-1})$ where $m_i^j$ are all of the queries to mechanism $\mechanism_j$. 
        \item Output $\mechanism_j(x_j, m_0^j, \dots, m_{t-1}^j; r_j)$. 
    \end{enumerate}
\end{enumerate}
\end{definition}

We are mainly interested in the case where all mechanisms operate on the same dataset, i.e., the private input for each $\mechanism_i$ are all the same. 

We show that to prove an interactive DP mechanism is $(\eps, \delta)$-differentially private, it suffices to consider all deterministic adversaries. 

\salil{Tianhao, the following should be a lemma, not a theorem}\tianhao{done}

\begin{lemma}
An interactive mechanism $\mechanism$ is $(\eps, \delta)$-differentially private if and only if 
for every pair of adjacent datasets $x, x'$, for every \emph{deterministic} adversary algorithm $\analyst$, for every possible output set $T \subseteq \range \left( \view{\analyst}{ \mechanism(\cdot) } \right)$ 
we have 
\begin{equation}
\Pr \left[ \view{\analyst}{\mechanism(x)}  \in T \right] \le e^\eps \Pr \left[ \view{\analyst}{\mechanism(x')} \in T \right] + \delta
\label{eq:def}
\end{equation}
\end{lemma}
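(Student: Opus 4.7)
The $(\Leftarrow)$ direction is the substantive one; the $(\Rightarrow)$ direction is immediate since the inequality in the definition of interactive DP is required to hold for \emph{every} adversary, and deterministic adversaries are a special case. So the plan is to show that verifying the inequality on deterministic adversaries suffices to recover it for arbitrary randomized ones, via an averaging argument over the adversary's random coins.

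Fix adjacent datasets $x, x'$, a randomized adversary $\analyst$, and a measurable set $T \subseteq \range(\view{\analyst}{\mechanism(\cdot)})$. Recall from Definition~\ref{def:view} that a view of $\analyst$ is a tuple $(r_\analyst, m_1, m_3, \ldots)$ consisting of $\analyst$'s coins together with the messages it received. For each fixing $\rho$ of $\analyst$'s coins, let $\analyst_\rho$ denote the deterministic adversary obtained by hard-coding $r_\analyst = \rho$ into $\analyst$, and let
$$T_\rho \;=\; \{\, (m_1, m_3, \ldots) \,:\, (\rho, m_1, m_3, \ldots) \in T\,\}$$
be the corresponding ``slice'' of $T$, so that $T_\rho \subseteq \range(\view{\analyst_\rho}{\mechanism(\cdot)})$.

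The key identity, by the law of total probability and the fact that $\mechanism$'s coins are independent of $\analyst$'s coins, is
$$\Pr\!\left[\view{\analyst}{\mechanism(y)} \in T\right] \;=\; \sum_{\rho} \Pr[r_\analyst = \rho]\cdot \Pr\!\left[\view{\analyst_\rho}{\mechanism(y)} \in T_\rho\right]$$
for $y \in \{x, x'\}$. Applying the hypothesis~(\ref{eq:def}) to each deterministic adversary $\analyst_\rho$ with test set $T_\rho$, multiplying by $\Pr[r_\analyst = \rho]$, and summing over $\rho$ yields the desired $(\eps,\delta)$-DP inequality for $\analyst$ on $T$. Since $\analyst$ and $T$ were arbitrary, $\mechanism$ is $(\eps,\delta)$-differentially private.

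The only mild subtlety is bookkeeping: one must be careful to include $r_\analyst$ in the view when defining $T_\rho$, and to ensure $\analyst_\rho$ is well-defined for every $\rho$ (handling measure-zero issues in the continuous case by standard means, or by noting that without loss of generality the coins are a discrete binary string as in the protocol definition). No obstacle of substance arises; this is the standard ``convex combination of deterministic strategies'' reduction adapted to the interactive DP setting.
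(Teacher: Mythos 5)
Your proof is correct and follows essentially the same approach as the paper's: decompose the probability over the adversary's coins using their independence from the mechanism's coins, slice $T$ into $T_\rho$ (exactly as the paper defines $T_{r_A}$), and apply the deterministic-adversary hypothesis pointwise. The only difference is stylistic — you present the averaging argument directly, whereas the paper wraps it in a proof by contradiction — but the underlying reasoning is identical.
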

\begin{proof}
The necessity is immediately implied by the definition of interactive differential privacy. 
We prove the direction of sufficiency here. 
Assume that mechanism $\mechanism$ satisfies (\ref{eq:def}) for every deterministic adversary. 
Suppose, for contradiction, that there exists a randomized adversary $\analyst$ and some output set $T$ s.t. 
\begin{equation}
\Pr \left[ \view{\analyst}{\mechanism(x)}  \in T \right] > e^\eps \Pr \left[ \view{\analyst}{\mechanism(x')} \in T \right] + \delta
\label{eq:violate}
\end{equation}
Since the random coins of $\analyst$ and $\mechanism$ are independently chosen, we have 
$$
\Pr \left[ \view{\analyst}{\mechanism(x)}  \in T \right]
= \mathbb{E}_{r_A} \left[ 
\Pr_{r_\mechanism} \left[ \view{\analyst(r_A)}{\mechanism(x; r_\mechanism)}  \in T\right]
\right]. 
$$
Therefore, there must exists at least one fixed $r_A$ s.t. 
$$
\Pr \left[ \view{\analyst(r_A)}{\mechanism(x)}  \in T \right] > e^\eps \Pr \left[ \view{\analyst(r_A)}{\mechanism(x')} \in T \right] + \delta
$$
otherwise \ref{eq:violate} is impossible. 
\salil{there is still a missing step here.  the view of the randomized adversary $\analyst(r_A)$ with its coin tosses includes $r_A$ in it, whereas the view of a deterministic adversary should not have any $r_A$ in it.   formally, you should define the deterministic adversary $\analyst_{r_A}=\analyst(r_A)$ and the set $T_{r_A}=\{(m_1,m_3,\ldots) : (r_A,m_1,m_3,\ldots)\in T\}$ and argue that the probabilities are the same as above to complete the proof.  }  \tianhao{done}
Therefore, we can define a deterministic adversary $\analyst_{r_A}=\analyst(r_A)$. For set $T_{r_A}=\{(m_1,m_3,\ldots) : (r_A,m_1,m_3,\ldots)\in T\}$, since we have 
$$
\Pr \left[ \view{\analyst(r_A)}{\mechanism(x)}  \in T \right] 
= \Pr \left[ \view{\analyst_{r_A}}{\mechanism(x)}  \in T_{r_A} \right] 
$$
we know that $\analyst_{r_A}$ is a counter example for our assumption, which leads to the conclusion. 
\end{proof}

For the convenience of the proof, we introduce a variant of concurrent composition of interactive protocols, which only accept queries in the exact order of $\mechanism_0, \dots, \mechanism_{k-1}$. 

\begin{definition}[Ordered Concurrent Composition of Interactive Protocols]
Let $\mechanism_0, \dots, \mechanism_{k-1}$ be interactive mechanisms. 
We say $\mechanism = \concomp_{\order}(\mechanism_0, \dots, \mechanism_{k-1})$ is the \emph{ordered concurrent composition} of mechanisms $\mechanism_0, \dots, \mechanism_{k-1}$ if $\mechanism(x)$ runs as follows:
\begin{enumerate}
    \item Random coin tosses and inputs for $\mechanism$ are the same as $\concomp(\mechanism_0, \dots, \mechanism_{k-1})$. 
    \item $\mechanism(x, m_0, \dots, m_{i-1}; r)$ is defined as follows: 
    \begin{enumerate}
        \item Let $j = i \bmod k$, $t = \lfloor i/k\rfloor$.
        \item Output $\mechanism_j(x, m_j,m_{j+k}, \ldots, m_{j+t\cdot k};r_j)$.
    \end{enumerate}
\end{enumerate}
\end{definition}

We also introduce a special kind of interactive mechanism, which ignores all query strings begin with 0. 
\begin{definition}[Null-query Extension]
Given an interactive mechanism $\mechanism$, define its {\em null-query extension} $\mechanism^\emptyset$ defined as follows: For any input message sequence $m$, 
$\mechanism^\emptyset(x,m;r)=\mechanism(x,m';r)$ where $m'=(m'_1,\ldots,m'_k)$ such that $(1m'_1,\ldots,1m'_k)$ is the subsequence of
$m$ consisting of all strings that begin with bit 1. That is, all messages that begin with 0 are ``null queries'' that are ignored. 
By convention, $\mechanism(x,\lambda;r)=\bot$ where $\lambda$ is an empty tuple. 
\end{definition}

Now we show that in order to prove $\concomp(\mechanism_0, \dots, \mechanism_{k-1})$ is $(\eps, \delta)$-differentially private, it suffices to prove a corresponding ordered concurrent composition is also $(\eps, \delta)$-differentially private. 
We use $X \equiv Y$ to denote that two random variables $X$ and $Y$ have the same distribution. 

\begin{lemma}
$\concomp(\mechanism_0, \dots, \mechanism_{k-1})$ is an $(\eps, \delta)$-differentially private interactive mechanism if the ordered concurrent composition of the null-query extensions of $\mechanism_0, \dots, \mechanism_{k-1}$, i.e.,\\ $\concomp_{\order}(\mechanism^\emptyset_1, \dots, \mechanism^\emptyset_k)$, is an $(\eps, \delta)$-differentially private interactive mechanism.
\label{lemma:ordered}
\end{lemma}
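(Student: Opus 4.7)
The plan is to reduce the arbitrary interleaving of $\concomp$ to the fixed cyclic schedule of $\concomp_{\order}$ by using the null-query convention as a ``skip'' mechanism. Concretely, given any (deterministic) adversary $\analyst$ against $\concomp(\mechanism_0,\ldots,\mechanism_{k-1})$, I will build an adversary $\analyst'$ against $\concomp_{\order}(\mechanism^\emptyset_0,\ldots,\mechanism^\emptyset_{k-1})$ and a deterministic post-processing $f$ such that
$$f\!\left(\view{\analyst'}{\concomp_{\order}(\mechanism^\emptyset_0,\ldots,\mechanism^\emptyset_{k-1})(x)}\right) \equiv \view{\analyst}{\concomp(\mechanism_0,\ldots,\mechanism_{k-1})(x)}$$
as random variables. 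Since differential privacy is preserved under post-processing (applied to $f$ together with the pre-image of an output set $T$ under $f$), the $(\eps,\delta)$-DP guarantee assumed for the ordered null-query-extension composition immediately transfers to $\concomp(\mechanism_0,\ldots,\mechanism_{k-1})$, and it suffices by the previous lemma to consider deterministic $\analyst$.

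The construction of $\analyst'$ is the main point. At each round $i$ of the ordered composition, $\analyst'$ knows the currently scheduled index $j^\star = i \bmod k$. It maintains an internal simulation of $\analyst$, together with a buffer of responses received so far. If the simulated $\analyst$'s next action is a query of the form $(q,j^\star)$ to the scheduled mechanism, then $\analyst'$ sends the string $1\cdot q$ as the next message; the null-query extension $\mechanism^\emptyset_{j^\star}$ strips off the leading $1$ and feeds $q$ to $\mechanism_{j^\star}$, whose response $\analyst'$ forwards back into the simulation of $\analyst$. Otherwise (i.e.\ $\analyst$ wants to query some other mechanism, or has halted) $\analyst'$ sends a string beginning with $0$, which $\mechanism^\emptyset_{j^\star}$ treats as a null query and answers with the canonical symbol $\bot$; the simulation of $\analyst$ is not advanced in this case. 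When $\analyst$ halts, $\analyst'$ submits null queries for the remainder of the current cycle and then halts as well.

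The correctness of the simulation rests on one observation about the null-query extension: by its definition, the transcript that the underlying $\mechanism_j$ actually sees in the ordered execution is exactly the subsequence of ``real'' queries $\analyst'$ addressed to slot $j$, in the order they were sent. Because $\analyst'$ sends a real query to slot $j$ precisely when the internal $\analyst$ issues a query $(q,j)$, the induced query/answer interactions between each $\mechanism_j$ and the simulated $\analyst$ are identical (as joint distributions over the independent coins $r_0,\ldots,r_{k-1}$) to those in a direct execution of $\concomp(\mechanism_0,\ldots,\mechanism_{k-1})$ with $\analyst$. Defining $f$ to delete the padded null queries and their $\bot$ answers from $\view{\analyst'}{\cdot}$ and to output the resulting transcript (which is exactly $\view{\analyst}{\cdot}$) gives the equivalence claimed above, and the lemma follows.

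The main obstacle, and the only place care is needed, is the bookkeeping around cycle boundaries and halting: the ordered composition keeps cycling through all $k$ slots, so $\analyst'$ must keep padding with null queries until either $\analyst$ produces its next real query or the allotted halting condition is reached; one has to check that this padding does not alter any mechanism's internal view (which is precisely the content of the null-query-extension definition) and that halting is consistent in both models. Once this is handled, the reduction and the post-processing argument combine cleanly to yield the lemma.
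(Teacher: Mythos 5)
Your proposal is correct and takes essentially the same approach as the paper: build from $\analyst$ an adversary $\analyst'$ against the ordered composition that pads with null queries to skip slots, marks real queries with a leading $1$, and then recover $\view{\analyst}{\cdot}$ by a deterministic post-processing of $\view{\analyst'}{\cdot}$. One minor imprecision: by the paper's definition of the null-query extension, a null query to $\mechanism^\emptyset_j$ does not always return $\bot$ but rather repeats the answer to the most recent real query (returning $\bot$ only if no real query has yet been sent to that slot); this does not affect your argument since the post-processing $f$ can identify and strip null positions from the transcript regardless of the filler values.
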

\begin{proof}
Suppose $\concomp_{\order}\left(\mechanism^\emptyset_0, \dots, \mechanism^\emptyset_{k-1}\right)$ is $(\eps, \delta)$-differentially private. 
For every adversary $\analyst$ interacting with $\concomp \left(\mechanism_0, \dots, \mechanism_{k-1} \right)$, 
we construct another adversary $\analyst'$ interacting with $\concomp_\order(\mechanism_0^\emptyset, \dots, \mechanism_{k-1}^\emptyset)$ as follows: 
given any settings of coin tosses $r$, and any history $(q_0, a_0, \dots, q_{i-1}, a_{i-1})$ between $\analyst$ and $\concomp(\mechanism_0, \dots, \mechanism_{k-1})$,
\begin{enumerate}
    \item Let $q_i = \analyst(a_0, \dots, a_{i-1}; r)$.
    \item Parse $q_{i-1}$ as $(q_{i-1}^*, s)$ where $q_{i-1}^*$ is a query and $s \in \{0, \dots, k-1\}$ the index of target mechanism. Parse $q_{i}$ as $(q_{i}^*, t)$ in a similar way. 
    \item 
    Send the null query $0$ to $\mechanism_{(s+1)\mod k}^\emptyset, \dots, \mechanism_{(t-1)\mod k}^\emptyset$ in order. 
    \item Send $1q_i^*$ to $\mechanism_t^\emptyset$. 
\end{enumerate}
Write $\mechanism = \concomp(\mechanism_0, \dots, \mechanism_{k-1})$, and $\mechanism' = \concomp_{\order}(\mechanism_0^\emptyset, \dots, \mechanism_{k-1}^\emptyset)$. 
For every query sequence $q$ from $\analyst$, we have $\mechanism(x, q; r) = \mechanism'(x, q'; r)$ where $q'$ is the sequence of queries that $\analyst'$ asks based on $q$ (with `1' in front of every query in $q$ and additional $0$s). 
Therefore, for every $\analyst$ interact with $\mechanism$, and for every dataset $x$ we have 
$$
\view{\analyst}{\mechanism(x)} \equiv \postprocessing(
\view{\analyst'}{\mechanism'(x)})
$$
where $\postprocessing$ refers to remove all repeated answers due to the null queries. 
This immediately leads to 
\begin{align}
&\Pr[ \view{\analyst}{\mechanism(x)} \in T] \nonumber \\
&= 
\Pr[ \postprocessing( \view{\analyst'}{\mechanism'(x)}) \in T] \nonumber \\
&\le e^\eps \Pr[ \postprocessing(\view{\analyst'}{\mechanism'(x')}) \in T] + \delta \nonumber \\
&= e^\eps \Pr[ \view{\analyst}{\mechanism(x')} \in T] + \delta \nonumber
\end{align}

Therefore, $\mechanism$ is also $(\eps, \delta)$-DP. 
\end{proof}

Given Lemma \ref{lemma:ordered}, for all of the concurrent compositions we considered in this paper, we assume that the concurrent compositions are ordered. For example, if an adversary $\analyst$ is concurrently interacting with two mechanisms $\concomp(\mechanism_0, \mechanism_1)$, we assumes that the queries are alternates between $\mechanism_0$ and $\mechanism_1$.

\section{Concurrent Composition for Pure Interactive Differential Privacy}

In this section, we show that for pure differential privacy, the privacy bound for concurrent composition is the same as for sequential or noninteractive composition. 
The proof idea is that in an interactive protocol where the adversary is concurrently interacting with multiple mechanisms, its interaction with one particular mechanism could be viewed as the combination of the adversary and the remaining mechanisms interacting with that mechanism, and the differential privacy guarantee still holds for the ``combined adversary''.

\newcommand{\indis}[2]{\stackrel{( #1 , #2 )}{\approx}}

A useful notation for thinking about differential privacy and simplify presentations is defined below. 
\begin{definition}
Two random variables $Y$ and $Z$ taking values in the same output space $\mathcal{Y}$ is \emph{$(\eps, \delta)$-indistinguishable} if for every event $T \subseteq \mathcal{Y}$, we have:
$$
\Pr[Y \in T] \le e^\eps \Pr[Z \in T] + \delta
$$
$$
\Pr[Z \in T] \le e^\eps \Pr[Y \in T] + \delta
$$
which is denoted as $Y \indis{\eps}{\delta} Z$. 
\end{definition}
Notice that an algorithm $\mechanism$ is $(\eps, \delta)$ differentially private if and only if for all pairs of adjacent datasets $x, x'$, we have $\mechanism(x) \indis{\eps}{\delta} \mechanism(x')$. 

\begin{lemma}[\cite{vadhan2017complexity}]
For random variables $X, Y, Z$, if $X \indis{\eps_1}{0} Y$, $Y \indis{\eps_2}{0} Z$, then $X \indis{\eps_1+\eps_2}{0} Z$. 
\end{lemma}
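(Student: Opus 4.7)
The plan is to prove this by directly chaining the two pointwise inequalities that $(\eps,0)$-indistinguishability provides. Because $\delta=0$, the approximate-DP definition collapses to a pure multiplicative bound, and pure multiplicative bounds compose by ordinary multiplication, giving exponents that add.

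First I would fix an arbitrary event $T \subseteq \mathcal{Y}$. From the hypothesis $X \indis{\eps_1}{0} Y$ I would read off
\[
\Pr[X \in T] \le e^{\eps_1}\cdot \Pr[Y \in T],
\]
and from $Y \indis{\eps_2}{0} Z$ I would read off
\[
\Pr[Y \in T] \le e^{\eps_2}\cdot \Pr[Z \in T].
\]
Chaining these two bounds yields $\Pr[X \in T] \le e^{\eps_1+\eps_2}\cdot \Pr[Z \in T]$. Then I would repeat the argument in the reverse direction, using the other halves of the indistinguishability hypotheses, to obtain $\Pr[Z \in T] \le e^{\eps_1+\eps_2}\cdot \Pr[X \in T]$. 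Since $T$ was arbitrary, this is exactly $X \indis{\eps_1+\eps_2}{0} Z$.

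There is no real obstacle here: the only subtlety worth flagging is why the proof does not extend cleanly to $\delta>0$. With a nonzero slack, chaining the inequalities would produce a cross term of the form $e^{\eps_1}\delta_2 + \delta_1$ in the additive error, which is why the general statement (needed implicitly in the hybrid argument for Theorem~\ref{thm:concomp-basic-approx}) requires the geometric-series bound on $\delta_g$ rather than a simple sum. Since the lemma as stated restricts to $\delta=0$, no such bookkeeping is necessary and the proof is a two-line chain of inequalities applied in each direction.
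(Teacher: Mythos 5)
Your proof is correct and is exactly the standard chain-of-multiplicative-inequalities argument; the paper does not actually spell out a proof of this lemma (it cites it to a survey), but what you wrote is precisely what that proof would be. Your closing remark about why a naïve chain fails for $\delta>0$ — producing the cross term $e^{\eps_1}\delta_2+\delta_1$ rather than $\delta_1+\delta_2$ — is accurate and is indeed the source of the geometric factor $\frac{e^{k\eps}-1}{e^\eps-1}$ that appears in Theorem~\ref{thm:concomp-basic-approx}.
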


\begin{restatable}[Basic Composition of Pure Interactive Differential Privacy]{theorem}{concompbasicpure}
\label{thm:concomp-basic}
If interactive mechanisms $\mechanism_0, \dots, \mechanism_{k-1}$ are each $(\eps_i, 0)$-differentially private, then their concurrent composition $\concomp(\mechanism_0, \ldots, \mechanism_{k-1})$ is $\left(\sum_{i=0}^{k-1} \eps_i, 0 \right)$-interactive differentially private. 
\end{restatable}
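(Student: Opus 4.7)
The plan is to follow the hybrid argument sketched after Theorem~\ref{thm:concomp-basic-approx}, now carried out carefully in the pure DP regime so that the $\delta$-term vanishes and only the $\eps_i$'s add up. First, by Lemma~\ref{lemma:ordered} it suffices to prove the bound for the ordered concurrent composition $\concomp_{\order}(\mechanism_0^\emptyset,\ldots,\mechanism_{k-1}^\emptyset)$ of the null-query extensions; since null-query extensions inherit the privacy parameters of the underlying mechanisms, I will for notational simplicity continue to write $\mechanism_i$ for the extensions. Fix adjacent datasets $x,x'$ and a deterministic adversary $\analyst$ (using the earlier lemma that deterministic adversaries suffice).

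For $i=0,1,\ldots,k$, define the hybrid random variable
\begin{equation*}
H_i \;=\; \view{\analyst}{ \concomp_{\order}(\mechanism_0(x'),\ldots,\mechanism_{i-1}(x'),\mechanism_i(x),\ldots,\mechanism_{k-1}(x)) },
\end{equation*}
so that $H_0$ is $\analyst$'s view when the composed mechanism is run on $x$ throughout and $H_k$ is the view on $x'$. The main step is to show $H_{i-1} \indis{\eps_{i-1}}{0} H_i$ for each $i\in\{1,\ldots,k\}$. To do this I build a ``combined adversary'' $\analyst_i$ that interacts with a single mechanism $\mechanism_{i-1}$ while internally simulating all other mechanisms: $\analyst_i$ draws independent random coins $r_0,\ldots,r_{i-2},r_i,\ldots,r_{k-1}$ and runs $\analyst$, routing its queries to index $j\neq i-1$ to the internally simulated copy of $\mechanism_j$ on input $x'$ (if $j<i-1$) or $x$ (if $j>i-1$), and routing its queries to index $i-1$ to the external mechanism. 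Because all other mechanisms and $\analyst$ are simulated internally, $\analyst_i$ is a well-defined (randomized) adversary against $\mechanism_{i-1}$ alone, and $\analyst$'s complete view in the $i$-th hybrid is a deterministic function of $\analyst_i$'s view of $(\analyst_i,\mechanism_{i-1}(\cdot))$. Thus
\begin{equation*}
H_{i-1} \equiv f_i\bigl(\view{\analyst_i}{\mechanism_{i-1}(x)}\bigr), \qquad
H_i \equiv f_i\bigl(\view{\analyst_i}{\mechanism_{i-1}(x')}\bigr)
\end{equation*}
for the same post-processing $f_i$. The $(\eps_{i-1},0)$-DP of $\mechanism_{i-1}$ applied to the randomized adversary $\analyst_i$ (valid by Definition~\ref{def:interactive-dp}) gives $\view{\analyst_i}{\mechanism_{i-1}(x)} \indis{\eps_{i-1}}{0} \view{\analyst_i}{\mechanism_{i-1}(x')}$, and indistinguishability is preserved under post-processing, yielding $H_{i-1} \indis{\eps_{i-1}}{0} H_i$.

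Finally I chain the $k$ hybrid bounds using the transitivity lemma for $(\eps,0)$-indistinguishability stated just before the theorem, obtaining
\begin{equation*}
H_0 \indis{\sum_{i=0}^{k-1}\eps_i}{0} H_k,
\end{equation*}
which is precisely the $(\sum_i \eps_i, 0)$-DP guarantee for $\concomp(\mechanism_0,\ldots,\mechanism_{k-1})$. The main conceptual obstacle is a clean specification of the combined adversary $\analyst_i$ and verifying that the coins $r_j$ (for $j\neq i-1$) used in its internal simulations are independent of the external mechanism's coins, so that the single-mechanism DP guarantee indeed applies; the ordered-composition reduction from Lemma~\ref{lemma:ordered} is what lets me describe the interleaving of $\analyst$'s queries deterministically, avoiding any subtlety about adaptive choice of which mechanism to query.
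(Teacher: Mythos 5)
Your proof is correct and takes essentially the same approach as the paper's. The paper works out the two-mechanism case explicitly via a combined adversary $\analyst^*_{\Tilde{\mechanism}}(x)$ that internally simulates $\Tilde{\mechanism}$ and $\analyst$, shows that $\analyst$'s view is a post-processing of the combined adversary's view against $\mechanism$ alone, and then extends to $k$ mechanisms by induction; your direct $k$-step hybrid with combined adversaries $\analyst_i$ and post-processing maps $f_i$ is the unrolled version of the same argument, relying on the same ordered-composition reduction (Lemma~\ref{lemma:ordered}), the same transitivity lemma for $(\eps,0)$-indistinguishability, and the same observation that Definition~\ref{def:interactive-dp} quantifies over randomized adversaries, so the combined adversary's extra internal coins cause no difficulty.
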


\begin{proof}
We first consider the simplest case that $\analyst$ concurrently interact with 2 mechanisms $\mechanism, \Tilde{\mechanism}$, and then extend the result to general amount of mechanisms. 
Suppose $\mechanism$ and $\Tilde{\mechanism}$ are each $(\eps, 0)$ and $(\Tilde{\eps}, 0)$-differentially private interactive mechanisms. 
Denote the messages received by $\analyst$ from $\mechanism$ as $(a_0, a_1, \dots, )$, and the messages received by $\analyst$ from $\Tilde{\mechanism}$ as $(\Tilde{a}_0, \Tilde{a}_1, \dots,)$. Due to Lemma \ref{lemma:ordered}, we can WLOG assume $\analyst$ alternates messages between $\mechanism$ and $\Tilde{\mechanism}$, i.e., the sequence of messages $\analyst$ received is $(a_0, \Tilde{a}_0, a_1, \Tilde{a}_1, \dots, )$. 
We use $r_\analyst$, $r_{\mechanism}, r_{\Tilde{\mechanism}}$ to denote the random coin tosses for $\analyst$, $\mechanism$, and $\Tilde{\mechanism}$, respectively. We can view $\analyst$ and $\Tilde{\mechanism}(x)$ as a single adversary $\analyst^*_{\Tilde{\mechanism}}(x)$ interacting with $\mechanism(x)$ defined as follows: 
\begin{enumerate}
    \item Random coin tosses for $\analyst^*_{\Tilde{\mechanism}}(x)$ consist of $r = (r_\analyst, r_{\Tilde{\mechanism}})$. 
    \item $\analyst^*_{\Tilde{\mechanism}}(x)(a_0, a_1, \dots, a_{i-1}; r)$ is computed as follows: 
    \begin{enumerate}
        \item $\Tilde{q}_{i-1} = \analyst(a_0, \Tilde{a}_0, 
        a_1, \Tilde{a}_1, \dots, a_{i-1}; r_\analyst)$. 
        \item $\Tilde{a}_{i-1} = \Tilde{\mechanism}(x, \Tilde{q}_0, \Tilde{q}_1, \dots, \Tilde{q}_{i-1}; r_{\Tilde{\mechanism}})$. 
        \item $q_i = \analyst(a_0, \Tilde{a}_0, \dots, a_{i-1}, \Tilde{a}_{i-1}; r_\analyst)$. 
        \item Output $q_i$.  
    \end{enumerate}
\end{enumerate}


We can see that $\analyst^*_{\Tilde{\mechanism}}(x)$ is a well-defined strategy throughout the entire interactive protocol with $\mechanism$, where the randomness of $\analyst^*_{\Tilde{\mechanism}}(x)$ is fixed as $(r_\analyst, r_{\Tilde{\mechanism}})$. 
Given a transcript of $\analyst^*_{\Tilde{\mechanism}}(x)$'s view $(r_{\analyst}, r_{\Tilde{\mechanism}}, x, a_0, a_1, \dots, )$, we can recover the corresponding transcript of $\view{\analyst}{\concomp(\mechanism(x), \Tilde{\mechanism}(x))}$ through the following post-processing algorithm $\postprocessing$, which is defined as follows: 

$\postprocessing \left(r_{\analyst}, r_{\Tilde{\mechanism}}, a_0, a_1, \dots, a_{T-1} \right)$:
\begin{enumerate}
    \item For $i = 1 \dots T-1$, compute  
    \begin{enumerate}
        \item $\Tilde{q}_{i-1} = \analyst(a_0, \Tilde{a}_0, \dots, a_{i-1}; r_{\analyst})$
        \item $\Tilde{a}_{i-1} = \Tilde{\mechanism}(x, \Tilde{q}_1, \Tilde{q}_2, \dots, \Tilde{q}_{i-1}; r_{\Tilde{\mechanism}})$
    \end{enumerate}
    \item Output $(r_{\analyst}, a_0, \Tilde{a}_0, \dots, a_{T-1}, \Tilde{a}_{T-1})$. 
\end{enumerate}

\noindent Observe that for every $\left(x, r_\analyst, r_\mechanism, r_{\Tilde{\mechanism}}\right)$, 
\begin{align}
&\postprocessing \left( \view{\analyst^*_{\Tilde{\mechanism}}(x; r_\analyst, r_{\Tilde{\mechanism}})}{\mechanism(x; r_\mechanism)} \right) \nonumber \\
&= \view{\analyst(r_\analyst)}{\concomp(\mechanism(x; r_\mechanism), \Tilde{\mechanism}(x; r_{\Tilde{\mechanism}}))} \nonumber
\end{align}
Therefore we have 
\begin{align}
    &\Pr \left[ \view{\analyst}{\concomp(\mechanism(x), \Tilde{\mechanism}(x))} \in T \right] \nonumber \\
    &\equiv 
    \Pr \left[ \postprocessing \left(\view{ \analyst^*_{\Tilde{\mechanism}}(x) }{\mechanism(x)} \right) \in T \right] \nonumber
\end{align}

for every $T \subseteq \range(\view{\analyst}{\concomp(\mechanism(x), \Tilde{\mechanism}(x))})$. 

Since $\mechanism$ is $\eps$-differentially private, we know that 
$$
\view{ \analyst^*_{\Tilde{\mechanism}}(x) }{\mechanism(x)}
\indis{\eps}{0}
\view{ \analyst^*_{\Tilde{\mechanism}}(x) }{\mechanism(x')}
$$
which leads to 
\begin{align}
&\view{\analyst}{\concomp(\mechanism(x), \Tilde{\mechanism}(x))} \nonumber \\
&\equiv 
\postprocessing \left(
\view{ \analyst^*_{\Tilde{\mechanism}}(x) }{\mechanism(x)}
\right) \nonumber \\
&\indis{\eps}{0}
\postprocessing \left(
\view{ \analyst^*_{\Tilde{\mechanism}}(x) }{\mechanism(x')}
\right) \nonumber \\
&\equiv \view{\analyst}{\concomp(\mechanism(x'), \Tilde{\mechanism}(x))} \nonumber
\end{align}

Symmetrically, we can obtain
\begin{align}
&\view{\analyst}{\concomp(\mechanism(x'), \Tilde{\mechanism}(x))} \nonumber \\
&\indis{\Tilde{\eps}}{0} 
\view{\analyst}{\concomp(\mechanism(x'), \Tilde{\mechanism}(x'))} \nonumber
\end{align}

Therefore, 
we have 
\begin{align}
&\view{\analyst}{\concomp(\mechanism(x), \Tilde{\mechanism}(x))} \nonumber \\
&\indis{\eps + \Tilde{\eps}}{0} 
\view{\analyst}{\concomp(\mechanism(x'), \Tilde{\mechanism}(x'))} \nonumber
\end{align}

The result can be easily extended to the case when more than 2 mechanisms are concurrently composed by induction.  Therefore 
for every $\eps_i \ge 0$, if interactive mechanism $\mechanism_i$ is $(\eps_i, 0)$-differentially private for $i=0, \dots, k-1$, then the concurrent composition $\concomp(\mechanism_0, \dots, \mechanism_{k-1})$ is $\left(\sum_{i=0}^{k-1} \eps_i, 0\right)$-differentially private. 
\end{proof}

This result tells us that even under concurrent composition, the privacy parameters of the resulting composed mechanisms are the ``sum up'' of the individual algorithms for the case pure differential privacy.

\section{Concurrent Composition for Approximate Interactive Differential Privacy}

In this section, we explore the privacy guarantee for the concurrent composition of interactive differential privacy when $\delta > 0$.
We show a privacy guarantee of concurrent composition in a similar logic flow as in Theorem \ref{thm:concomp-basic}, but in approximate differential privacy. 
As argued in the proof of Theorem \ref{thm:concomp-basic}, when the adversary is interacting with two mechanisms, we can view $\analyst$ and one of the mechanisms as a single adversary interacting with another mechanism, and the view of the combined adversary still enjoy the differential privacy guarantee. Therefore, if both interactive mechanisms $\mechanism$ and $\Tilde{\mechanism}$ are $(\eps, \delta)$-differentially private, then for all $S \subseteq \range(\view{\analyst}{\concomp(\mechanism(x), \Tilde{\mechanism}(x))})$, 
we know that 
\begin{align}
& \Pr \left[
\view{\analyst}{\concomp(\mechanism(x), \Tilde{\mechanism}(x))} \in S \right] \nonumber \\
& \le e^\eps \Pr \left[
\view{\analyst}{\concomp(\mechanism(x'), \Tilde{\mechanism}(x))} \in S \right] + \delta \nonumber
\end{align}
and 
\begin{align}
& \Pr \left[
\view{\analyst}{\concomp(\mechanism(x'), \Tilde{\mechanism}(x))} \in S \right] \nonumber \\
& \le e^\eps \Pr \left[
\view{\analyst}{\concomp(\mechanism(x'), \Tilde{\mechanism}(x'))} \in S \right] + \delta  \nonumber
\end{align}
and therefore we know that 
\begin{align}
& \Pr \left[
\view{\analyst}{\concomp(\mechanism(x), \Tilde{\mechanism}(x))} \in S \right] \nonumber \\
& \le e^\eps \Pr \left[
\view{\analyst}{\concomp(\mechanism(x'), \Tilde{\mechanism}(x))} \in S \right] + \delta \nonumber \\
& \le e^\eps (e^\eps \Pr \left[
\view{\analyst}{\concomp(\mechanism(x'), \Tilde{\mechanism}(x'))} \in S \right] + \delta ) + \delta \nonumber \\
& \le e^{2\eps} \Pr \left[
\view{\analyst}{\concomp(\mechanism(x'), \Tilde{\mechanism}(x'))} \in S \right] 
+ (1+e^\eps) \delta \nonumber
\end{align}

A more general concurrent composition bound is stated and derived as follows:
\begin{theorem}[Theorem \ref{thm:concomp-basic-approx} restated]
Let $\sigma: \{0, 1, \ldots, n-1\} \rightarrow \{0, 1, \ldots, n-1\}$ be any permutation of $0, \ldots, n-1$. 
If interactive mechanisms $\mechanism_0, \dots, \mechanism_{k-1}$ are each $(\eps_i, \delta_i)$-differentially private, then their concurrent composition $\concomp(\mechanism_0, \ldots, \mechanism_{k-1})$ is
$\left(\sum_{i=0}^{k-1} \eps_i, \delta_g \right)$-differentially private, where 
$$
\delta_g = \min_{\sigma} 
\left( 
\delta_{\sigma(0)} + \sum_{i=1}^{k-1} e^{\sum_{j=0}^{i-1}\eps_{\sigma(j)}} \delta_{\sigma(i)} 
\right)
$$
For mathematical convenience, we use an upper bound for $\delta_g$ in practice and $\concomp(\mechanism_0, \ldots, \mechanism_{k-1})$ is $\left( \sum_{i=0}^{k-1} \eps_i, k e^{\sum_{i=0}^{k-1} \eps_i} \max_i(\delta_i) \right)$-differentially private.
\end{theorem}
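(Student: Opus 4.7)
The plan is to extend the two-mechanism argument sketched above to $k$ mechanisms via a hybrid argument that swaps one input at a time (in an order chosen by a permutation $\sigma$) and then chains together the resulting indistinguishability bounds using a ``triangle inequality'' for approximate indistinguishability.

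Fix adjacent datasets $x, x'$, an adversary $\analyst$, and a permutation $\sigma$ of $\{0,\ldots,k-1\}$. For $i=0,1,\ldots,k$, define the hybrid random variable
$$H_i = \view{\analyst}{\concomp(\mechanism_0(y_0^{(i)}),\ldots,\mechanism_{k-1}(y_{k-1}^{(i)}))},$$
where $y_j^{(i)} = x'$ if $\sigma^{-1}(j) < i$ and $y_j^{(i)} = x$ otherwise. Then $H_0$ and $H_k$ are the views on $x$ and $x'$ respectively, and consecutive hybrids $H_i, H_{i+1}$ differ only in the input to $\mechanism_{\sigma(i)}$.

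For each $i$, I would apply the combined-adversary reduction from the proof of Theorem \ref{thm:concomp-basic}: since all mechanisms other than $\mechanism_{\sigma(i)}$ have identical fixed inputs in $H_i$ and $H_{i+1}$, the strategy of $\analyst$ together with an internal simulation of those $k-1$ mechanisms (with their fixed inputs and fresh independent coins) can be packaged as a single adversary $\analyst_i^*$ interacting with $\mechanism_{\sigma(i)}$ alone, from whose view the concurrent view $H_i$ (resp. $H_{i+1}$) can be recovered by a deterministic postprocessing $\postprocessing$. By the $(\eps_{\sigma(i)}, \delta_{\sigma(i)})$-differential privacy of $\mechanism_{\sigma(i)}$ and the fact that postprocessing preserves approximate indistinguishability, I obtain $H_i \indis{\eps_{\sigma(i)}}{\delta_{\sigma(i)}} H_{i+1}$.

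Finally, a direct expansion of probabilities shows the triangle inequality: if $X \indis{\eps_1}{\delta_1} Y$ and $Y \indis{\eps_2}{\delta_2} Z$, then $X \indis{\eps_1+\eps_2}{\delta_1 + e^{\eps_1}\delta_2} Z$. Iterating this across the hybrid chain $H_0, H_1, \ldots, H_k$ gives
$$H_0 \indis{\sum_{i=0}^{k-1}\eps_{\sigma(i)}}{\delta_{\sigma(0)} + \sum_{i=1}^{k-1} e^{\sum_{j=0}^{i-1}\eps_{\sigma(j)}}\delta_{\sigma(i)}} H_k,$$
and since $\sum_i \eps_{\sigma(i)} = \sum_i \eps_i$, taking the minimum of the $\delta$ bound over $\sigma$ yields the claimed formula for $\delta_g$. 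The simpler bound $k\,e^{\sum_i \eps_i}\max_i \delta_i$ then follows by upper-bounding each coefficient of $\delta_{\sigma(i)}$ in the sum by $e^{\sum_j \eps_j}$. The main subtlety relative to the pure-DP case of Theorem \ref{thm:concomp-basic} is precisely this asymmetric triangle inequality, which inflates each subsequent $\delta_{\sigma(i)}$ by the previously accumulated $\eps$ factor and is what makes the ordering $\sigma$ matter.
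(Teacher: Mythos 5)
Your proposal matches the paper's proof essentially step for step: the paper also performs a hybrid argument swapping the dataset one mechanism at a time, invokes the combined-adversary reduction from Theorem~\ref{thm:concomp-basic} to establish $H_i \indis{\eps_{\sigma(i)}}{\delta_{\sigma(i)}} H_{i+1}$, chains these bounds (accumulating the $e^{\eps}$ factors on the $\delta$ terms exactly as in your asymmetric triangle inequality), and then minimizes over permutations and upper-bounds by $k\,e^{\sum_i \eps_i}\max_i \delta_i$. The only cosmetic difference is that you package the chaining as an explicit triangle-inequality lemma for $\indis{\eps}{\delta}$, whereas the paper unrolls the recursion inline; the content is identical.
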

\begin{proof}
We use a hybrid argument. For each $0 \le i \le k-1$, since $\mechanism_i$ is $(\eps_i, \delta_i)$ differentially private, we know that 
\begin{align}
& \Pr \left[
\view{\analyst}{\concomp(\mechanism_0(x'), \ldots,  \mechanism_{i-1}(x'), \mechanism_{i}(x), \ldots, \mechanism_{k-1}(x))} \in S \right] \nonumber \\
& \le e^{\eps_i}
\Pr \left[
\view{\analyst}{\concomp(\mechanism_0(x'), \ldots,  \mechanism_{i-1}(x'), \mechanism_{i}(x'), \ldots, \mechanism_{k-1}(x))} \in S \right] + \delta_i \nonumber
\end{align}
by viewing $\analyst$ and $\mathcal{M}_0, \dots, \mathcal{M}_{i-1}, \mathcal{M}_{i+1},  \mathcal{M}_{k-1}$ as a combined adversary and follow a similar argument as in the proof of Theorem \ref{thm:concomp-basic-approx}. 

Therefore, 
\begin{align}
& \Pr \left[
\view{\analyst}{\concomp(\mechanism_0(x), \mechanism_1(x), \ldots, \mechanism_{k-1}(x))} \in S \right] \nonumber \\
& \le e^{\eps_0}
\Pr \left[
\view{\analyst}{\concomp(\mechanism_0(x'), \mechanism_1(x), \ldots, \mechanism_{k-1}(x))} \in S \right] + \delta_0 \nonumber \\
& \le e^{\eps_0}
(e^{\eps_1}
\Pr \left[
\view{\analyst}{\concomp(\mechanism_0(x'), \mechanism_1(x'), \ldots, \mechanism_{k-1}(x))} \in S \right] + \delta_1) + \delta_0 \nonumber \\
& \le \ldots \nonumber \\
& \le e^{\sum_{i=0}^{k-1} \eps_i} 
\Pr \left[
\view{\analyst}{\concomp(\mechanism_0(x'), \mechanism_1(x'), \ldots, \mechanism_{k-1}(x'))} \in S \right] \nonumber \\
&~~~~+ (\delta_0+e^{\eps_0}\delta_1  + e^{\eps_0+\eps_1} \delta_2 + \ldots + e^{\sum_{i=0}^{k-2} \eps_i}\delta_{k-1} ) \nonumber 
\end{align}
We can see that the $\delta$ term of $\concomp(\mechanism_0, \ldots, \mechanism_{k-1})$ depends on different permutations of $(\mechanism_0, \ldots, \mechanism_{k-1})$, and the tightest possible bound for the $\delta$ term is 
$$
\min_{\sigma} 
\left( 
\delta_{\sigma(0)} + \sum_{i=1}^{k-1} e^{\sum_{j=0}^{i-1}\eps_{\sigma(j)}} \delta_{\sigma(i)} 
\right)
$$
We also note that $\delta_0+e^{\eps_0}\delta_1  + e^{\eps_0+\eps_1} \delta_2 + \ldots + e^{\sum_{i=0}^{k-2} \eps_i}\delta_{k-1} \le k e^{\sum_{i=0}^{k-1} \eps_i} \max_i (\delta_i)$, which is more easier to work with in practice. 
\end{proof}

Notice that if the privacy parameters are homogeneous, i.e. every interactive mechanism is $(\eps, \delta)$ differentially private, then this bound reduce to the bound of group privacy for $(\eps, \delta)$-differential privacy.

\section{Characterization of $\concomp$ for Pure Interactive Differential Privacy}


\cite{kairouz2015composition} shows that to analyze the composition of arbitrary noninteractive $(\eps_i, \delta_i)$-DP algorithms, it suffices to analyze the composition of the following simple variant of randomized response. 

\newcommand{\iamzero}{\texttt{`Iam0'}}
\newcommand{\iamone}{\texttt{`Iam1'}}

\begin{definition}[\cite{kairouz2015composition}]
Define a randomized noninteractive algorithm 
$\RR_{(\eps, \delta)}:\{0,1\} \rightarrow \{0,1, \iamzero , \iamone \}$ as follows:
$$
\begin{array}{ll}
\operatorname{Pr}\left[\RR_{(\eps, \delta)}(0)=\iamzero \right]=\delta & \operatorname{Pr}\left[\RR_{(\eps, \delta)}(1)=\iamzero \right]=0 \\
\operatorname{Pr}\left[\RR_{(\eps, \delta)}(0)=0\right]= (1-\delta) \cdot \frac{e^{\eps}}{1+e^{\eps}} & \operatorname{Pr}\left[\RR_{(\eps, \delta)}(1)=0\right]= (1-\delta) \cdot \frac{1}{1+e^{\eps}} \\
\operatorname{Pr}\left[\RR_{(\eps, \delta)}(0)=1\right]= (1-\delta) \cdot \frac{1}{1+e^{\eps}} & \operatorname{Pr}\left[\RR_{(\eps, \delta)}(1)=1\right]= (1-\delta) \cdot \frac{e^{\eps}}{1+e^{\eps}} \\
\operatorname{Pr}\left[\RR_{(\eps, \delta)}(0)=\iamone \right]=0 & \operatorname{Pr}\left[\RR_{(\eps, \delta)}(1)=\iamone \right]=\delta
\end{array}
$$
\end{definition}
Note that $\RR_{(\eps, \delta)}$ is a noninteractive $(\eps, \delta)$-differentially private mechanism. \cite{kairouz2015composition} and \cite{murtagh2016complexity} showed that $\RR_{(\eps, \delta)}$ can be used to simulate the output of every (noninteractive) $(\eps, \delta)$-DP algorithm on adjacent databases. $\RR$ refers to ``randomized response", as this mechanism is a generalization of the classic randomized response to $\delta>0$ and $\eps\neq \ln 2$  \cite{warner1965randomized}. \salil{edited this sentence}

\begin{theorem}[\cite{kairouz2015composition}]
\label{thm:noninterac}
Suppose that $\mechanism$ is $(\eps, \delta)$-differentially private. Then for every pair of adjacent datasets $x_0, x_1$ there exists a randomized algorithm $T$ s.t. $T(\RR (b))$ is identically distributed to $\mechanism(x_b)$ for both $b=0$ and $b=1$. 
\end{theorem}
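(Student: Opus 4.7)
The plan is to construct the post-processing $T : \{0, 1, \iamzero, \iamone\} \to \mathcal{Y}$ explicitly from the pair of output distributions $P = \mechanism(x_0)$ and $Q = \mechanism(x_1)$, working in the discrete setting (the continuous case reduces to this via Radon--Nikodym decompositions). Write $p_y = \Pr[P = y]$ and $q_y = \Pr[Q = y]$, and partition the output space $\mathcal{Y}$ into three regions based on the likelihood ratio: $S_0 = \{y : p_y > e^\eps q_y\}$, $S_1 = \{y : q_y > e^\eps p_y\}$, and the middle region $S = \mathcal{Y} \setminus (S_0 \cup S_1)$ where the two distributions are within $e^{\pm \eps}$ of each other. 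Applying $(\eps,\delta)$-DP to the sets $S_0$ and $S_1$ yields $\delta_0 := \sum_{y \in S_0}(p_y - e^\eps q_y) \le \delta$ and $\delta_1 := \sum_{y \in S_1}(q_y - e^\eps p_y) \le \delta$, which will be the key quantitative inputs.

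Next I would assign the four conditional distributions of $T$. On input $\iamzero$, I let $T$ output $y$ with probability proportional to $\max(p_y - e^\eps q_y, 0)$, padded by extra mass on $S$ so that the total equals~$1$; on input $\iamone$, I define the symmetric distribution based on $\max(q_y - e^\eps p_y, 0)$ and $\delta_1$. On inputs $0$ and $1$, $T$ samples from two distributions $D_0$ and $D_1$ over $\mathcal{Y}$ chosen so that the overall mixture matches $P$ on $b = 0$ and $Q$ on $b = 1$.

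The verification reduces to showing that, for every $y$, the linear equation
\begin{equation*}
\delta \cdot t_{\iamzero,y} + (1-\delta)\tfrac{e^\eps}{1+e^\eps} t_{0,y} + (1-\delta)\tfrac{1}{1+e^\eps} t_{1,y} = p_y
\end{equation*}
together with the symmetric equation (roles of $0$ and $1$ swapped, equal to $q_y$, with $t_{\iamone,y}$ replacing $t_{\iamzero,y}$) admits a nonnegative solution satisfying $\sum_y t_{c,y} = 1$ for each $c \in \{0, 1, \iamzero, \iamone\}$. Concretely: for $y \in S_0$ the $\iamzero$ branch absorbs the excess $p_y - e^\eps q_y$; for $y \in S_1$ the $\iamone$ branch absorbs $q_y - e^\eps p_y$; on $S$ the remaining mass is split between $t_{0,y}$ and $t_{1,y}$ in the ratio dictated by the $\RR$ probabilities so that the two equations become simultaneously consistent. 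The bounds $\delta_0, \delta_1 \le \delta$ are exactly what allows the $\iamzero$ and $\iamone$ distributions to be completed to legitimate probability distributions by borrowing mass from $S$.

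The step I expect to be the main obstacle is handling the slack when $\delta_0 < \delta$ (or $\delta_1 < \delta$): this forces $T(\iamzero)$ to put nonzero mass on the middle region and requires re-checking that the residual mass reassigned to $t_{0,y}$ and $t_{1,y}$ on $S$ remains nonnegative and still satisfies the $q_y$-equation after having been set from the $p_y$-equation. The cleanest way to carry this out will be to first normalize so that $\delta_0 = \delta_1 = \delta$ by explicitly writing down the padding prescription on $S$, and then solve the resulting $2 \times 2$ system pointwise on $S$; the remaining routine steps are just checking nonnegativity and summing to verify $\sum_y t_{c,y} = 1$.
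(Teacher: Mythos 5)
This theorem appears in the paper only as a cited result of Kairouz, Oh, and Viswanath; the paper supplies no proof of its own, so there is no internal argument to compare your sketch against. Your construction is the standard route from that source: partition $\mathcal{Y}$ by likelihood ratio, absorb the excess masses $\max(p_y - e^\eps q_y,0)$ and $\max(q_y - e^\eps p_y,0)$ into the $\iamzero$ and $\iamone$ branches respectively, and solve a $2\times 2$ system pointwise on the middle region $S$.

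That said, the obstacle you flag at the end is the actual crux of the proof, not a routine check, and two parts of your plan for it need repair. First, ``normalize so that $\delta_0=\delta_1=\delta$'' is not an operation you can perform: $\delta_0$ and $\delta_1$ are determined by $(P,Q)$ and are generically unequal, so you have two different padding masses $\delta-\delta_0$ and $\delta-\delta_1$ to distribute and cannot use a single common padding density on $S$ for both $\iamzero$ and $\iamone$. Second, the real condition you must preserve is that the padded residuals $\tilde p_y = p_y - \delta\, t_{\iamzero,y}$ and $\tilde q_y = q_y - \delta\, t_{\iamone,y}$ stay within a factor $e^{\pm\eps}$ of each other pointwise on $S$, since this is exactly what makes the $2\times 2$ solve return nonnegative $t_{0,y},t_{1,y}$. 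A naive proportional padding (proportional to $p_y$ for $\iamzero$ and to $q_y$ for $\iamone$) scales $\tilde p_y$ and $\tilde q_y$ by different factors and so breaks this constraint at boundary points of $S$ where $p_y$ is close to $e^\eps q_y$ or $e^{-\eps} q_y$. Showing that a feasible padding always exists is where most of the work in the Kairouz--Oh--Viswanath and Murtagh--Vadhan proofs lives, and it needs another invocation of the DP inequality on sets built from the boundary of $S$. If you want to avoid this bookkeeping entirely, the Blackwell-style argument is cleaner: the trade-off function (equivalently, the full hockey-stick divergence profile) of $(P,Q)$ dominates that of $(\RR(0),\RR(1))$ precisely because $\mechanism$ is $(\eps,\delta)$-DP, and for binary experiments such domination is equivalent to $(P,Q)$ being a post-processing of $(\RR(0),\RR(1))$.
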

This theorem is useful due to one of the central properties of differential privacy is that it is preserved under ``post-processing'' \cite{dwork2006calibrating,dwork2014algorithmic}, which is formulated as follows:
\begin{lemma}[Post-processing]
\label{thm:post-processing}
If a randomized algorithm $\mechanism: \X \rightarrow \Y$ is $(\eps, \delta)$-differentially private, and $\mathcal{F}: \Y \rightarrow \mathcal{Z}$ is any randomized function, then $\mathcal{F} \circ \mechanism: \X \rightarrow \mathcal{Z}$ is also $(\eps, \delta)$-differentially private. 
\end{lemma}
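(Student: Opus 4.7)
The plan is to reduce the randomized post-processing to the deterministic case and then average over the randomness of $\mathcal{F}$. First I would fix an arbitrary set $T \subseteq \mathcal{Z}$ and write $\mathcal{F}(y) = \mathcal{F}(y; r_{\mathcal{F}})$, where $r_{\mathcal{F}}$ are independent random coins for $\mathcal{F}$ (independent of the coins of $\mechanism$). For each fixed string $r_{\mathcal{F}}$, define the (deterministic) preimage
\[
T_{r_{\mathcal{F}}} = \{ y \in \mathcal{Y} : \mathcal{F}(y; r_{\mathcal{F}}) \in T \} \subseteq \mathcal{Y}.
\]

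Next, by independence of $\mechanism$'s and $\mathcal{F}$'s coins, the event $\mathcal{F}(\mechanism(x)) \in T$ can be expressed as
\[
\Pr[\mathcal{F}(\mechanism(x)) \in T] \;=\; \mathbb{E}_{r_{\mathcal{F}}}\bigl[\, \Pr[\mechanism(x) \in T_{r_{\mathcal{F}}}] \,\bigr].
\]
For each value of $r_{\mathcal{F}}$, the set $T_{r_{\mathcal{F}}}$ is a subset of $\mathcal{Y}$, and so the $(\eps,\delta)$-differential privacy of $\mechanism$ applied to the adjacent datasets $x,x'$ gives
\[
\Pr[\mechanism(x) \in T_{r_{\mathcal{F}}}] \;\le\; e^{\eps} \cdot \Pr[\mechanism(x') \in T_{r_{\mathcal{F}}}] + \delta.
\]

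Finally I would take expectation over $r_{\mathcal{F}}$ on both sides, using linearity of expectation, to obtain
\[
\Pr[\mathcal{F}(\mechanism(x)) \in T] \;\le\; e^{\eps} \cdot \Pr[\mathcal{F}(\mechanism(x')) \in T] + \delta,
\]
which is exactly the $(\eps,\delta)$-DP guarantee for $\mathcal{F} \circ \mechanism$. Since $T$ was arbitrary and $(x,x')$ was an arbitrary adjacent pair, the conclusion follows. The only subtlety — and the one step that requires a bit of care rather than being purely mechanical — is ensuring that the coins of $\mathcal{F}$ are genuinely independent of those of $\mechanism$, so that the nested probabilities factor cleanly; once that is set up, the argument is a one-line averaging over the deterministic DP bound.
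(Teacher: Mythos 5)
Your proof is correct. Note that the paper does not actually prove this lemma; it is stated as a known fact and cited from \cite{dwork2006calibrating,dwork2014algorithmic}. Your argument—fixing the coins $r_{\mathcal{F}}$ to reduce to a deterministic post-processing map, defining the preimage set $T_{r_{\mathcal{F}}}\subseteq\mathcal{Y}$, invoking the $(\eps,\delta)$-DP guarantee of $\mechanism$ on that set, and then averaging over $r_{\mathcal{F}}$ by linearity of expectation—is exactly the standard proof of post-processing closure found in those references. The independence of $r_{\mathcal{F}}$ from $\mechanism$'s coins, which you correctly flag as the one nontrivial hypothesis, is what justifies the identity $\Pr[\mathcal{F}(\mechanism(x))\in T]=\mathbb{E}_{r_{\mathcal{F}}}\bigl[\Pr[\mechanism(x)\in T_{r_{\mathcal{F}}}]\bigr]$, and it holds by definition of composing a randomized function with fresh coins. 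No gaps.
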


In noninteractive setting, Theorem \ref{thm:noninterac} can be used to prove the optimal composition theorem \cite{kairouz2015composition,murtagh2016complexity} since to analyze the composition of arbitrary $(\eps_i, \delta_i)$-DP algorithms, it suffices to analyze the composition of $\RR_{(\eps_i, \delta_i)}$ algorithms. 

If we are able to prove a similar result that arbitrary interactive differential private mechanisms can also be simulated by the post-processing of randomized response where the interactive post-processing algorithm does not depend on the adversary, then we will be able to extend all results of composition theorem for noninteractive mechanisms to interactive mechanisms. In this paper, we consider the case of pure differential privacy. 


\begin{theorem}[Theorem \ref{thm:reduction-pure} restated]
Suppose that $\mechanism$ is an interactive $(\eps, 0)$-differentially private mechanism. Then for every pair of adjacent datasets $x_0, x_1$ there exists an interactive mechanism $T$ s.t. for every adversary $\analyst$ and every $b \in \{0, 1\}$ we have 
\begin{equation}
\View(\analyst, \mechanism(x_b))
    \equiv \View(\analyst, T(\RR_{(\eps, 0)}(b))) \nonumber
\end{equation}
\end{theorem}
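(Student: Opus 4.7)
The plan is to reverse-engineer $T$ from what the mixture $T(\RR_\eps(b))$ must equal, and then check that the resulting recipe describes a genuine interactive mechanism (i.e.\ that its response distribution at any point depends only on the past history, not on the adversary's hidden future strategy). Write $p = e^\eps/(1+e^\eps)$ and $q = 1-p$, and fix any deterministic adversary $\analyst$ (WLOG by the earlier lemma on deterministic adversaries). For a full transcript $\tau$, let $M_b(\tau) = \Pr[\view{\analyst}{\mechanism(x_b)} = \tau]$, and let $T_{b'}(\tau)$ denote the analogous probability under $T(b')$. The required equivalences $p T_0 + q T_1 = M_0$ and $q T_0 + p T_1 = M_1$ linearly force
$$T_{b'}(\tau) \;=\; \frac{p\, M_{b'}(\tau) - q\, M_{\neg b'}(\tau)}{p-q}$$
for every $\tau$. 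The task is then to realize these target distributions simultaneously, for every adversary, by one interactive $T$.

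Next I would convert this transcript-level formula into a step-by-step recipe. For a prefix $h$ ending in an adversary query, let $\mu_h^{(b)}$ be $\mechanism$'s next-response distribution at $h$ on input $x_b$, and let $M_b(h)$ denote the probability of reaching $h$ under $\mechanism(x_b)$; both depend only on $h$, since $M_b(h)$ factors as a product of $\mu_{h_{<j}}^{(b)}(m_j)$ over the mechanism-response steps in $h$. Dividing the target expression for $T_{b'}(h\|m)$ by that for $T_{b'}(h)$ gives the natural candidate for $T$'s conditional response,
$$\tau_h^{(b')}(m) \;=\; \frac{p\, M_{b'}(h)\,\mu_h^{(b')}(m) \;-\; q\, M_{\neg b'}(h)\,\mu_h^{(\neg b')}(m)}{p\, M_{b'}(h) - q\, M_{\neg b'}(h)},$$
a function of $h$ and $b'$ alone and hence a legitimate interactive strategy.

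The main obstacle is to verify that $\tau_h^{(b')}$ is a valid probability distribution. Summation to $1$ is immediate, as numerator and denominator sum in parallel over $m$. Non-negativity amounts to $M_{b'}(h\|m)/M_{\neg b'}(h\|m) \ge q/p = e^{-\eps}$ whenever the denominator is positive, which is exactly a pure-DP guarantee on reaching a partial transcript: it follows by applying $(\eps,0)$-differential privacy of $\mechanism$ to a deterministic adversary that halts (or takes any fixed continuation) immediately after reaching $h\|m$. Prefixes with $M_0(h) = M_1(h) = 0$ — which by pure DP occur jointly — can be assigned an arbitrary $\tau_h^{(b')}$.

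Finally, a short verification closes the argument: by construction $T_{b'}(\tau) = (p M_{b'}(\tau) - q M_{\neg b'}(\tau))/(p-q)$ factors correctly through the step-wise recursion, and using $p+q = 1$ gives $p T_0(\tau) + q T_1(\tau) = ((p^2-q^2)/(p-q))\,M_0(\tau) = M_0(\tau)$ and symmetrically $q T_0(\tau) + p T_1(\tau) = M_1(\tau)$, so $\view{\analyst}{T(\RR_\eps(b))} \equiv \view{\analyst}{\mechanism(x_b)}$. The whole construction hinges on the likelihood-ratio bound holding at every partial transcript, which is precisely what fails to transfer cleanly to approximate DP; this is why the reduction is established only for $\delta = 0$.
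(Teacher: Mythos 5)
Your proposal is correct and matches the paper's construction essentially step for step: with $p = e^\eps/(1+e^\eps)$, your transcript-level target $T_{b'}(\tau) = (pM_{b'}(\tau)-qM_{\neg b'}(\tau))/(p-q)$ is algebraically identical to the paper's $\vec{T}$, your conditional recipe $\tau_h^{(b')}$ coincides with the paper's recursive definition, and the two verification steps (normalization by cancellation, non-negativity via pure DP at partial transcripts) are the same. The only cosmetic differences are the $p,q$ notation, the slightly more explicit remark that $\tau_h^{(b')}$ depends only on the history (hence yields an adversary-independent $T$), and the handling of zero-probability prefixes, none of which changes the substance of the argument.
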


\begin{proof}
For arbitrary sequence of queries $\bm{q}^{(t)} = (q_0, \dots, q_{t-1})$ from $\analyst$, 
we denote by
$
\vec{\mechanism}(x, \bm{q}^{(t)}) = 
(\mechanism(x, \bm{q}^{(1)}), \mechanism(x, \bm{q}^{(2)}), \dots, \mechanism(x, \bm{q}^{(t)}))
$
the random variable consisting the first $t$ responses from mechanism $\mechanism$. 
We construct the interactive mechanism $T$ receiving queries $\bm{q}^{(t)}$ as follows: 
\begin{enumerate}
    \item If $t=0$, we have
    \begin{equation}
    \begin{split}
    \Pr \left[ T(0, q_0)=a_0 \right] = \frac{ e^\eps \Pr[\mechanism(x_0, q_0)=a_0] - \Pr[\mechanism(x_1, q_0)=a_0] }{e^\eps - 1}
    \end{split}
    \end{equation}
    \begin{equation}
    \begin{split}
    \Pr \left[ T(1, q_0)=a_0 \right] = \frac{ e^\eps \Pr[\mechanism(x_1, q_0)=a_0] - \Pr[\mechanism(x_0, q_0)=a_0] }{e^\eps - 1}
    \end{split}
    \end{equation}
    \item If $t>0$, given earlier responses $(a_0, \dots, a_{t-2})$, we define 
    \begin{equation}
    \begin{split}
    &\Pr \left[ T(0, \bm{q}^{(t)}) = a_{t-1} | a_0, \dots, a_{t-2} \right] \\
    &= \frac{e^\eps \Pr \left[ \vec{\mechanism}(x_0, \bm{q}^{(t)})=(a_0, \dots, a_{t-1}) \right] - \Pr \left[ \vec{\mechanism}(x_1, \bm{q}^{(t)})=(a_0, \dots, a_{t-1}) \right] }{(e^\eps - 1) \Pr \left[ \vec{T}(0, \bm{q}^{(t-1)}) = (a_0, \dots, a_{t-2}) \right] }
    \label{eq:dist-post}
    \end{split}
    \end{equation}
    \begin{equation}
    \begin{split}
    &\Pr \left[ T(1, \bm{q}^{(t)}) = a_{t-1} | a_0, \dots, a_{t-2} \right] \\
    &= \frac{e^\eps \Pr \left[\vec{\mechanism}(x_1, \bm{q}^{(t)})=(a_0, \dots, a_{t-1}) \right] - \Pr \left[\vec{\mechanism}(x_0, \bm{q}^{(t)})=(a_0, \dots, a_{t-1})\right] }{(e^\eps - 1) \Pr \left[\vec{T}(1, \bm{q}^{(t-1)}) = (a_0, \dots, a_{t-2})\right] }
    \end{split}
    \end{equation}
\end{enumerate}

Therefore, the distribution of $\vec{T}$ is 
\begin{align}
    &\Pr \left[ \vec{T}(0, \bm{q}^{(t)}) = (a_0, \dots, a_{t-1}) \right] \nonumber \\
    &= \frac{e^\eps \Pr \left[\vec{\mechanism}(x_0, \bm{q}^{(t)})=(a_0, \dots, a_{t-1}) \right] - \Pr \left[\vec{\mechanism}(x_1, \bm{q}^{(t)})=(a_0, \dots, a_{t-1}) \right] }{e^\eps - 1} \nonumber
\end{align}
\begin{align}
    &\Pr \left[ \vec{T}(1, \bm{q}^{(t)}) = (a_0, \dots, a_{t-1}) \right] \nonumber \\
    &= \frac{e^\eps \Pr \left[\vec{\mechanism}(x_1, \bm{q}^{(t)})=(a_0, \dots, a_{t-1})\right] - \Pr \left[\vec{\mechanism}(x_0, \bm{q}^{(t)})=(a_0, \dots, a_{t-1}) \right] }{e^\eps - 1} \nonumber
\end{align}

We can easily verify that all of the above are valid probability distributions. 
For example, 
\begin{align}
    &\sum_{a_{t-1}} \Pr \left[ T(0, \bm{q}^{(t)}) = a_{t-1} | a_0, \dots, a_{t-2} \right] \nonumber \\
    &= \frac{e^\eps \sum_{a_{t-1}} \Pr \left[ \vec{\mechanism}(x_0, \bm{q}^{(t)})=(a_0, \dots, a_{t-1}) \right] - \sum_{a_{t-1}} \Pr \left[ \vec{\mechanism}(x_1, \bm{q}^{(t)})=(a_0, \dots, a_{t-1}) \right] }{(e^\eps - 1) \Pr \left[ \vec{T}(0, \bm{q}^{(t-1)}) = (a_0, \dots, a_{t-2}) \right] } \nonumber \\
    & = \frac{e^\eps \Pr \left[ \vec{\mechanism}(x_0, \bm{q}^{(t)})=(a_0, \dots, a_{t-2}) \right] -  \Pr \left[ \vec{\mechanism}(x_1, \bm{q}^{(t)})=(a_0, \dots, a_{t-2}) \right] }{(e^\eps - 1) \Pr \left[ \vec{T}(0, \bm{q}^{(t-1)}) = (a_0, \dots, a_{t-2}) \right] } \nonumber \\
    & = 1
\end{align}
and for every possible $a_{t-1}$, the probability density is never negative since 
$$
\Pr \left[ \vec{\mechanism}(x_0, \bm{q}^{(t)})=(a_0, \dots, a_{t-1}) \right]
\le e^\eps \Pr \left[ \vec{\mechanism}(x_1, \bm{q}^{(t)})=(a_0, \dots, a_{t-1}) \right]
$$
as $\mechanism$ is $(\eps, 0)$-DP. 


We now show 
$$
\View(\analyst, \mechanism(x_b))
    \equiv \View(\analyst, T(\RR_{(\eps, 0)}(b)))
$$
for the case of $b=0$. 

Fix any possible view $(r, a_0, \dots, a_{t-1})$, we can derive the queries $\bm{q}^{(t)} = (q_0, \ldots, q_{t-1})$ from $\analyst$, where $q_i = \analyst(a_0, \ldots, a_{i-1}; r)$. Denote $R$ as the random variable of the randomness of $\analyst$. 
\begin{align}
&\Pr \left[ \View(\analyst, T(\RR_{(\eps, 0)}(0))) = (r, a_0, \dots, a_{t-1}) \right] \nonumber \\
& = \Pr \left[\RR_{(\eps, 0)}(0)=0 \right] \Pr \left[ \View(\analyst, T(0)) = (r, a_0, \dots, a_{t-1}) \right] \nonumber \\
&~~~~+ \Pr \left[\RR_{(\eps, 0)}(1)=0 \right] \Pr \left[ \View(\analyst, T(1)) = (r, a_0, \dots, a_{t-1}) \right] \nonumber \\
& = \frac{e^\eps}{1+e^\eps} \Pr \left[ \View(\analyst, T(0)) = (r, a_0, \dots, a_{t-1}) \right] \nonumber \\
&~~~~+ \frac{1}{1+e^\eps} \Pr \left[ \View(\analyst, T(1)) = (r, a_0, \dots, a_{t-1}) \right] \nonumber \\
& = \frac{e^\eps}{1+e^\eps} \Pr \left[ R=r \right]
\Pr \left[ \vec{T}(0, \bm{q}^{(t)}_r) = (a_0, \dots, a_{t-1})  | R=r \right] \nonumber \\
&~~~~+ \frac{1}{1+e^\eps} \Pr \left[ R=r \right] 
\Pr \left[ \vec{T}(1, \bm{q}^{(t)}_r) = (a_0, \dots, a_{t-1}) | R=r \right] \nonumber \\
& = \Pr \left[ R=r \right] \Pr \left[\vec{\mechanism}(x_0, \bm{q}^{(t)}_r)=(a_0, \dots, a_{t-1}) | R=r \right]
 \nonumber \\
& = \Pr \left[\View(\analyst, \mechanism(x_0)) = (r, a_0, \dots, a_{t-1}) \right] \nonumber
\end{align}
The case of $b=1$ could be similarly proved. Therefore, we proved the existence of such an interactive mechanism $T$ for any $(\eps, 0)$ interactive DP mechanisms. 
\end{proof}

\newcommand{\concomppriv}{\mathrm{OptCompPriv}}
\newcommand{\privloss}{\mathrm{PrivLoss}}

The above theorem suggests that the noninteractive $\RR_{\left(\varepsilon, 0 \right)}$ can simulate any $(\eps, 0)$ interactive DP algorithm. Since it is known that post-processing preserves differential privacy (Lemma \ref{thm:post-processing}), it follows that to analyze the concurrent composition of arbitrary $(\eps_i, 0)$ interactive differentially private algorithms, it suffices to analyze the composition of randomized response $\RR_{\left(\eps_i, 0 \right)}$. 
For an interactive mechanism $\mechanism$, we define $\privloss(\mechanism, \delta) = \inf \left\{\varepsilon \geq 0: \text{$\mechanism$ is $(\eps,\delta)$-DP}\right\}$, 
thus given a target security parameter $\delta_g$, the privacy loss of the concurrent composition of mechanisms $\mechanism_0, \ldots, \mechanism_{k-1}$ is denoted as $\privloss(\concomp(\mechanism_{0}, \ldots, \mechanism_{k-1}),\delta_g)$. 
\salil{better notation: for an interactive mechanism $\mechanism$, define $\mathrm{PrivLoss}(\mechanism,\delta) = \inf \left\{\varepsilon \geq 0: \text{$\mechanism$ is $(\eps,\delta)$-DP}\right\}$, and then you can write things like $\mathrm{PrivLoss}(\concomp(\mechanism_{0}, \ldots, \mechanism_{k-1}),\delta)$}\tianhao{done}
\salil{for a future version: for noninteractive mechanisms, it would be nice to have the notation Comp defined instead of just ConComp to emphasize that there is no issue of concurrency}
\salil{new sentences.  also write `noninteractive' rather than `non0interactive throughout'} \tianhao{done}
When the mechanisms $\mechanism_i$ are noninteractive (like $\RR_{(\eps,\delta)}$) we write $\comp$ rather than $\concomp$. 

\begin{lemma} 
Suppose there are interactive mechanisms $\mechanism_0, \ldots, \mechanism_{k-1}$ where for each $0 \le i \le k-1$, $\mechanism_i$ is $(\eps_i, 0)$-differentially private. For any values of $\eps_0, \dots, \eps_{k-1} \ge 0$, $\delta_g \in [0, 1)$, we have
\begin{align}
&\privloss(\concomp(\mechanism_{0}, \ldots, \mechanism_{k-1}),\delta_g) \nonumber \\
&= \privloss \left( \comp(\RR_{(\eps_0, 0)}, \dots, \RR_{(\eps_{k-1}, 0)}), \delta_g \right) \nonumber
\end{align}
\end{lemma}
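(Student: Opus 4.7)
The plan is to establish both inequalities by combining Theorem~\ref{thm:reduction-pure} with the post-processing property (Lemma~\ref{thm:post-processing}). The $\le$ direction is the main content and reduces the concurrent composition of interactive mechanisms to the noninteractive composition of randomized responses; the $\ge$ direction is witnessed by taking each $\mechanism_i$ to equal $\RR_{(\eps_i,0)}$ itself, which is a valid $(\eps_i,0)$-DP interactive mechanism.

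For the $\le$ direction, I would fix an arbitrary pair of adjacent datasets $x_0, x_1$ and apply Theorem~\ref{thm:reduction-pure} to each $\mechanism_i$ to obtain an interactive mechanism $T_i$ (depending only on $\mechanism_i, x_0, x_1$, not on any adversary) such that
$$\view{\analyst}{\mechanism_i(x_b)} \equiv \view{\analyst}{T_i(\RR_{(\eps_i,0)}(b))}$$
for every adversary $\analyst$ and both $b\in\{0,1\}$. The crucial property is that a single $T_i$ works simultaneously for all adversaries and both values of $b$.

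Next I would define a randomized post-processing $F$ that acts on a tuple $(b_0,\ldots,b_{k-1}) \in \{0,1\}^k$ as follows: $F$ internally initializes instances $T_0(b_0),\ldots,T_{k-1}(b_{k-1})$ with independent coin tosses and simulates the concurrent interaction between a copy of $\analyst$ and these $k$ interactive mechanisms, routing each query from $\analyst$ to the designated $T_i(b_i)$ and forwarding responses back, until $\analyst$ halts. The output of $F$ is $\analyst$'s complete view. Since each $T_i(\RR_{(\eps_i,0)}(b))$ is distributed identically to $\mechanism_i(x_b)$ as an interactive mechanism facing any adversary, and since the $T_i$'s use independent randomness, one obtains the joint distributional equivalence
$$\view{\analyst}{\concomp(\mechanism_0(x_b),\ldots,\mechanism_{k-1}(x_b))} \equiv F\bigl(\comp(\RR_{(\eps_0,0)}(b),\ldots,\RR_{(\eps_{k-1},0)}(b))\bigr).$$
Post-processing then yields the privacy-loss inequality at every $\delta_g$. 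For the $\ge$ direction, I would observe that each $\RR_{(\eps_i,0)}$ is itself a noninteractive $(\eps_i,0)$-DP mechanism, and for noninteractive mechanisms the concurrent composition coincides (as a distribution on adversary views) with the noninteractive composition $\comp$, since the adversary can do no better than collecting all $k$ outputs. Thus choosing $\mechanism_i = \RR_{(\eps_i,0)}$ realizes the RHS as an instance of the LHS.

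The main obstacle is the simulation step: one must carefully verify that the adversary's view of the concurrent interaction with the simulated $T_i$'s inside $F$ matches its view of the direct concurrent interaction with the $\mechanism_i$'s, even though the adversary may adaptively interleave queries across mechanisms and correlate them with messages received from other $T_j$'s. This is precisely where the adversary-independence clause in Theorem~\ref{thm:reduction-pure} is essential: the equivalence of views must hold for the particular ``combined'' adversary that emerges from the concurrent schedule, and this is guaranteed because $T_i$ is chosen before the adversary is fixed.
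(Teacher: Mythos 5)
Your proof takes essentially the same approach as the paper: use Theorem~\ref{thm:reduction-pure} to replace each $\mechanism_i$ by an interactive post-processing $T_i$ of $\RR_{(\eps_i,0)}$, observe that the adversary's concurrent view is a post-processing of $\comp(\RR_{(\eps_0,0)},\ldots,\RR_{(\eps_{k-1},0)})$, and get the reverse inequality from $\RR$ itself being a valid interactive $(\eps_i,0)$-DP mechanism. Your write-up is actually more explicit than the paper's (you define the post-processing $F$ and flag the key role of the adversary-independence of $T_i$, which the paper leaves implicit), but the underlying argument and decomposition are identical.
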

\begin{proof}
We want to show that 
\begin{align}
&\inf \left\{\varepsilon_{g} \geq 0: \concomp(\mechanism_{0}, \ldots, \mechanism_{k-1}) \text { is }\left(\varepsilon_{g}, \delta_{g}\right)\mathrm{-DP}\right\} \nonumber \\
&= \inf \left\{\varepsilon_{g} \geq 0: \comp \left(\RR_{(\eps_0, 0)}, \dots, \RR_{(\eps_{k-1}, 0)} \right) \text { is }\left(\varepsilon_{g}, \delta_{g}\right)\mathrm{-DP}\right\} \nonumber
\end{align}

Since the noninteractive $\RR_{(\eps_0, 0)}, \dots, \RR_{(\eps_{k-1}, 0)}$ can be viewed as a special case of interactive DP mechanisms, we have 
\begin{align}
&\inf \left\{\varepsilon_{g} \geq 0: \concomp(\mechanism_{0}, \ldots, \mechanism_{k-1}) \text { is }\left(\varepsilon_{g}, \delta_{g}\right)\mathrm{-DP}\right\} \nonumber \\
&\ge \inf \left\{\varepsilon_{g} \geq 0: \comp \left(\RR_{(\eps_0, 0)}, \dots, \RR_{(\eps_{k-1}, 0)} \right) \text { is }\left(\varepsilon_{g}, \delta_{g}\right)\mathrm{-DP}\right\} \nonumber
\end{align}

For the other direction, suppose $\comp \left(\RR_{(\eps_0, 0)}, \dots, \RR_{(\eps_{k-1}, 0)} \right)$ is $(\eps_g^*, \delta_g)$-DP. 
By post-processing inequality, we know any for any tuple of post-processing interactive mechanisms $T_0, \ldots, T_{k-1}$, $\concomp \left(T_0 \left(\RR_{(\eps_0, 0)} \right), \ldots, T_{k-1} \left(\RR_{(\eps_{k-1}, 0)} \right) \right)$ is also $(\eps_g^*, \delta_g)$-DP. 
We know from Theorem \ref{thm:reduction-pure} that for every pair of adjacent datasets $x_0, x_1$, there must exist interactive mechanisms $T_0, \ldots, T_{k-1}$ such that for every adversary $\analyst$, $\view{\analyst}{\mechanism_i(x_b)}$ is identically distributed as $\view{\analyst}{T_i(\RR_{ \left( \varepsilon, 0 \right)}(b))}$ for all $i = 0, \ldots, k-1$. 
Therefore, we know that $\concomp(\mechanism_{0}, \ldots, \mechanism_{k-1})$ is also $(\eps_g^*, \delta_g)$-DP. 
Taking the infimum over $\eps_g^*$ will then complete the proof.

\salil{there are still missing steps here.  I suggest removing the inf's from the statement, as you've already said how you are going to do the infs in an earlier paragraph. You are assuming here that the Comp of the RR's is $(\eps_g^*,\delta_g)$-DP.  The first equality above needs to be justified by arguing that for every adversary interacting with ConComp of the $\mechanism_i$'s, its view is identical to its view interacting with ConComp of the $T_i(\RR)$'s.  Then the next inequality needs to be justified by explaining how the latter view is a postprocessing of Comp of the $\RR$'s.}\tianhao{if we remove inf, does the inequality for set makes sense? }
\salil{It's not just removing the inf's.  Above you set up the proof as follows: ``suppose $\comp \left(\RR_{(\eps_0, 0)}, \dots, \RR_{(\eps_{k-1}, 0)} \right)$ is $(\eps_g^*, \delta_g)$-DP. We will show that $\concomp(\mechanism_0, \ldots, \mechanism_{k-1})$ is also $(\eps_g^*, \delta_g)$-DP.''  So the proof should be a series of implications ($\Rightarrow$), not a series of inequalities or set inclusions: $\comp(\RR)$ is DP $\Rightarrow$ $\concomp(T(\RR))$ is DP $\Rightarrow$ $\concomp(\mechanism)$ is DP}\tianhao{done}

\end{proof}

We note that $\RR_{(\eps_0, 0)}, \dots, \RR_{(\eps_{k-1}, 0)}$ are noninteractive mechanisms, therefore we can use any composition theorems for noninteractive DP mechanisms to bound the privacy parameter of their composition. The tightest composition theorem for noninteractive DP is derived in \cite{murtagh2016complexity}. 

\begin{theorem}[Optimal Composition Theorem for noninteractive DP]
\label{thm:nonint-opt}
If $\mechanism_0, \dots, \mechanism_{k-1}$ are each $(\eps_i, \delta_i)$-differentially private, then given the target security parameter $\delta_g$, the privacy parameter of concurrent composition $\concomp(\mechanism_0, \dots, \mechanism_{k-1})$ is
upper bounded by the least value of $\varepsilon_{g} \geq 0$ such that
$$
\frac{1}{\prod_{i=0}^{k-1}\left(1+\mathrm{e}^{\varepsilon_{i}}\right)} 
\sum_{S \subseteq\{0, \ldots, k-1\}} \max 
\left
\{\mathrm{e}^{\sum_{i \in S} \varepsilon_{i}}-\mathrm{e}^{\varepsilon_{g}} \cdot \mathrm{e}^{\sum_{i \notin S} \varepsilon_{i}}, 0
\right\} 
\leq 1-\frac{1-\delta_{g}}{\prod_{i=0}^{k-1}\left(1-
\delta_{i} \right)}
$$
\end{theorem}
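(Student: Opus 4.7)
The plan is to reduce the analysis of arbitrary $(\eps_i,\delta_i)$-DP mechanisms to the composition of a canonical ``worst-case'' mechanism, namely $\RR_{(\eps_i,\delta_i)}$, and then compute the privacy parameters of that composition by direct symbol-pushing. This is the original strategy of \cite{kairouz2015composition,murtagh2016complexity}; since the theorem is stated for noninteractive mechanisms, the reduction step does not require the harder interactive version (Theorem~\ref{thm:reduction-pure}).

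First I would invoke the noninteractive simulation result of \cite{kairouz2015composition,murtagh2016complexity}: for every $(\eps_i,\delta_i)$-DP mechanism $\mechanism_i$ and every pair of adjacent datasets $x_0,x_1$, there is a post-processor $T_i$ with $\mechanism_i(x_b)\equiv T_i(\RR_{(\eps_i,\delta_i)}(b))$ for $b\in\{0,1\}$. Combined with Lemma~\ref{thm:post-processing}, this reduces the problem to bounding the privacy parameters of $\comp(\RR_{(\eps_0,\delta_0)},\ldots,\RR_{(\eps_{k-1},\delta_{k-1})})$ on $b=0$ vs.\ $b=1$.

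Next I would apply the standard equivalent characterization that two distributions $P_0,P_1$ are $(\eps_g,\delta_g)$-indistinguishable iff $\sum_o \max\{P_0(o)-e^{\eps_g}P_1(o),0\}\leq \delta_g$, taking $P_b$ to be the output distribution of $\comp(\RR_{(\eps_0,\delta_0)}(b),\ldots,\RR_{(\eps_{k-1},\delta_{k-1})}(b))$. Since the coordinates are independent, $P_b$ is a product, and I would split outcomes $o=(o_0,\ldots,o_{k-1})$ with $o_i\in\{\iamzero,0,1,\iamone\}$ into three cases. If some $o_i=\iamone$, then $P_0(o)=0$ and the contribution is zero. If no $o_i=\iamone$ but some $o_i=\iamzero$, then $P_1(o)=0$, so the total contribution from such $o$ equals the total $P_0$-mass on them, namely $1-\prod_i(1-\delta_i)$. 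Otherwise every $o_i\in\{0,1\}$; setting $S=\{i:o_i=0\}$, a direct calculation gives $P_0(o)=\frac{\prod_i(1-\delta_i)}{\prod_i(1+e^{\eps_i})}\,e^{\sum_{i\in S}\eps_i}$ and $P_1(o)=\frac{\prod_i(1-\delta_i)}{\prod_i(1+e^{\eps_i})}\,e^{\sum_{i\notin S}\eps_i}$, and summing over $S\subseteq\{0,\ldots,k-1\}$ gives $\frac{\prod_i(1-\delta_i)}{\prod_i(1+e^{\eps_i})}\sum_S \max\{e^{\sum_{i\in S}\eps_i}-e^{\eps_g}e^{\sum_{i\notin S}\eps_i},0\}$.

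Adding the contributions from all three cases and rearranging $\sum_o \max\{P_0(o)-e^{\eps_g}P_1(o),0\}\leq \delta_g$ by dividing through by $\prod_i(1-\delta_i)$ yields exactly the displayed bound, and the symmetric inequality (swapping $x_0\leftrightarrow x_1$) follows from the symmetry of $\RR_{(\eps_i,\delta_i)}$ under $0\leftrightarrow 1$ and $\iamzero\leftrightarrow\iamone$. The main obstacle is bookkeeping rather than conceptual: one must verify that the case analysis is exhaustive and non-overlapping, and that the final algebraic rearrangement cleanly produces the right-hand side $1-\frac{1-\delta_g}{\prod_i(1-\delta_i)}$ rather than some equivalent but uglier form.
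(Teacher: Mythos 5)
Your reconstruction is correct and follows exactly the strategy of Kairouz--Oh--Viswanath and Murtagh--Vadhan, which is what the paper relies on: Theorem~\ref{thm:nonint-opt} is stated here as a cited external result from \cite{murtagh2016complexity} and is not reproved in the paper. Your case analysis on outcomes of $\comp(\RR_{(\eps_0,\delta_0)},\ldots,\RR_{(\eps_{k-1},\delta_{k-1})})$ — mass $1-\prod_i(1-\delta_i)$ from tuples with some ``Iam0'' and no ``Iam1'', plus the product-form terms indexed by $S\subseteq\{0,\ldots,k-1\}$ from all-binary tuples — and the final rearrangement after dividing through by $\prod_i(1-\delta_i)$ are accurate, so the proposal is a faithful reconstruction of the argument the paper is invoking.
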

Therefore, we are ready to bound the concurrent composition for an arbitrary set of interactive differentially private algorithms by simply plugging parameters to the optimal composition bound for noninteractive DP mechanisms in \cite{murtagh2016complexity}. 

\salil{I am going to change the theorems that are in the intro, so for now it's safest to explicitly write each theorem statement in the technical sections rather than automatically duplicating the intro statements}\tianhao{done}

\begin{theorem}[Corollary \ref{cor:extension} Restated]
\label{thm:bound-opt-restate}
If $\mechanism_0, \dots, \mechanism_{k-1}$ are each $(\eps_i, 0)$-differentially private, then given the target security parameter $\delta_g$, the privacy parameter of concurrent composition $\concomp(\mechanism_0, \dots, \mechanism_{k-1})$ is
upper bounded by the least value of $\varepsilon_{g} \geq 0$ such that
$$
\frac{1}{\prod_{i=0}^{k-1}\left(1+\mathrm{e}^{\varepsilon_{i}}\right)} 
\sum_{S \subseteq\{0, \ldots, k-1\}} \max 
\left
\{\mathrm{e}^{\sum_{i \in S} \varepsilon_{i}}-\mathrm{e}^{\varepsilon_{g}} \cdot \mathrm{e}^{\sum_{i \notin S} \varepsilon_{i}}, 0
\right\} 
\leq \delta_g
$$
A special case when all $\mechanism_0, \dots, \mechanism_{k-1}$ are $(\eps, 0)$-differentially private, then privacy parameter is upper bounded by the least value of $\eps_g \ge 0$ such that
$$
\frac{1}{\left(1+\mathrm{e}^{\varepsilon}\right)^k} 
\sum_{i=0}^k {k \choose i} 
\max 
\left
\{\mathrm{e}^{i \eps }-\mathrm{e}^{\varepsilon_{g}} \cdot \mathrm{e}^{(k-i) \varepsilon}, 0
\right\} 
\leq \delta_g
$$
\end{theorem}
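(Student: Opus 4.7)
The plan is to combine the two ingredients already established in this section. First, I would invoke the preceding lemma, which establishes
\[
\privloss(\concomp(\mechanism_{0}, \ldots, \mechanism_{k-1}), \delta_g) = \privloss(\comp(\RR_{(\eps_0, 0)}, \ldots, \RR_{(\eps_{k-1}, 0)}), \delta_g).
\]
This reduces the problem of bounding the concurrent composition of interactive pure DP mechanisms to bounding the (ordinary) composition of noninteractive randomized response mechanisms with the same privacy parameters. The reduction itself rests on Theorem \ref{thm:reduction-pure} (each $\mechanism_i$ on adjacent datasets is simulated by an adversary-independent post-processing $T_i$ of $\RR_{(\eps_i, 0)}$) together with post-processing (Lemma \ref{thm:post-processing}).

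Next, I would apply Theorem \ref{thm:nonint-opt} directly to $\comp(\RR_{(\eps_0, 0)}, \ldots, \RR_{(\eps_{k-1}, 0)})$. Since each $\RR_{(\eps_i, 0)}$ is $(\eps_i, 0)$-DP, we have $\delta_i = 0$ for every $i$, and the right-hand side simplifies as
\[
1-\frac{1-\delta_{g}}{\prod_{i=0}^{k-1}\left(1-\delta_{i} \right)} \;=\; 1-(1-\delta_g) \;=\; \delta_g,
\]
yielding exactly the inequality in the theorem statement. This gives the first (heterogeneous) bound.

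For the homogeneous case where $\eps_0 = \cdots = \eps_{k-1} = \eps$, I would group the subsets $S \subseteq \{0,\ldots,k-1\}$ by cardinality: there are $\binom{k}{i}$ subsets of size $i$, each contributing the same term $\max\{e^{i\eps} - e^{\eps_g} e^{(k-i)\eps}, 0\}$, and the denominator becomes $(1+e^\eps)^k$. This gives the second displayed inequality.

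No genuinely new technical obstacle arises in this proof: the difficult step, which is obtaining an adversary-independent simulator $T$, has already been handled by Theorem \ref{thm:reduction-pure}. The only thing to double-check is that the reduction lemma quoted above is stated with the correct form of $\privloss$ so that both infima are taken over the same quantity, and that plugging $\delta_i = 0$ into Theorem \ref{thm:nonint-opt} gives precisely the right-hand side claimed, both of which are routine.
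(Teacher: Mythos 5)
Your proposal matches the paper's argument exactly: the paper combines the lemma equating $\privloss(\concomp(\mechanism_{0}, \ldots, \mechanism_{k-1}), \delta_g)$ with $\privloss(\comp(\RR_{(\eps_0,0)}, \ldots, \RR_{(\eps_{k-1},0)}), \delta_g)$ with Theorem \ref{thm:nonint-opt}, specialized to $\delta_i = 0$, which is precisely what you do, including the routine $\binom{k}{i}$-grouping for the homogeneous case. No gap; this is the same proof.
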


\section{Experimental Results}
\label{sec:experiments}

\salil{edited this par}
In this section, we present empirical evidence for our conjecture that the Optimal Composition Theorems can be extended to the concurrent composition of approximate DP mechanisms. 
\salil{I'll add this conjecture to intro}
Specifically, we experimentally evaluate the conjecture for 3-message interactive mechanisms with 1-bit messages, as illustrated in Figure \ref{fig:two-step-mechanism}. 
The input for the mechanism is a bit $x \in \{0, 1\}$ (corresponding to fixing two adjacent datasets). 
In the first round, the mechanism outputs a bit $a_0$ regardless of the query, so we omit $q_0$ and directly writing the probability of outputting $a_0$ as $\Pr[\mechanism(x)=a_0]$. In the second round, the mechanism receives a query bit $\analyst(a_0)$ from the adversary, and output another bit $a_1$. Each such mechanism $\mechanism_{\bm{p}}$ is defined by 10 parameters $\bm{p} = \left(p_0, p_{00}, p_{01}, p_{10}, p_{11}, p_0', p_{00}', p_{01}', p_{10}', p_{11}' \right)$, where $p_0 = \Pr[\mechanism_{\bm{p}}(0)=0]$, $p_0' = \Pr[\mechanism_{\bm{p}}(1)=0]$, $p_{ij} = \Pr[\mechanism_{\bm{p}}(0, j)=(i, 0)]$, $p_{ij}' = \Pr[\mechanism_{\bm{p}}(1, j)=(i, 0)]$. 
\salil{added subscript $\bm{p}$}
We note that the concurrent composition of two copies of such a mechanism already has a nontrivial interleaving, as shown in Figure~\ref{fig:concomp}.

\salil{dropped paragraph on concurrent composition simulation, since those results are subsumed by the post-processing results}

We experimentally test whether instantiations of this 2-round interactive mechanism that are $(\eps, \delta)$-DP can be simulated as the interactive post-processing of randomized response $\RR_{(\eps,\delta)}$.   Specifically, we sample over 10,000 choices of the parameter vector $\bm{p}$ defining the mechanism $\mechanism_{\bm{p}}$.  For each one, we pre-define a value for $\delta$ and compute $\eps = \privloss(\mechanism_{\bm{p}},\delta)$ through enumerating over all possible adversaries. 
\salil{Tianhao, pelase explain how you calculated $\eps$ and $\delta$ here} \tianhao{done}

Next, we used linear programming to see if there exists an interactive post-processing mechanism $\vec{T}$ which takes an output from $\RR_{(\eps,\delta)}$, and sets it to have the exact same output distribution as the original 2-round for every possible query $q = (q_1)$ and output sequence $(a_0, a_1)$:
\salil{following equations are taking up way too much space. use eqnarray* with lefteqn to get a better formatting}\tianhao{the best I can do ...} \salil{how about the first example below?  note also the added cdots}
\tianhao{done}
\begin{eqnarray*}
\lefteqn{\Pr \left[\vec{\mechanism}(0, q) = (a_0, a_1)\right] = \delta\cdot \Pr \left[\vec{T}(\iamzero, q) = (a_0, a_1) \right]}\\
&& + (1-\delta)\cdot \frac{e^\eps}{e^\eps+1} \Pr \left[\vec{T}(0, q) = (a_0, a_1)\right]
+ (1-\delta)\cdot \frac{1}{e^\eps+1} \Pr \left[\vec{T}(1, q) = (a_0, a_1) \right]
\end{eqnarray*}

\begin{eqnarray*}
\lefteqn{\Pr \left[\vec{\mechanism}(1, q) = (a_0, a_1) \right] = (1-\delta) \cdot \frac{1}{e^\eps+1} \Pr \left[\vec{T}(0, q) = (a_0, a_1) \right]} \\
&& + (1-\delta) \cdot \frac{e^\eps}{e^\eps+1} \Pr \left[\vec{T}(1, q) = (a_0, a_1)\right] 
+ \delta \cdot \Pr \left[\vec{T}(\iamone, q) = (a_0, a_1) \right]
\end{eqnarray*}

Each $\Pr \left[\vec{T}(c, q) = (a_0, a_1) \right]$ is an unknown parameter here, where $c \in \{0,1, \iamzero , \iamone \}$. 
We also enforce them formulating valid distributions:
$$
\forall c, q, a_0, a_1, \Pr \left[\vec{T}(c, q) = (a_0, a_1) \right] \ge 0
$$
$$
\forall c, \analyst, \sum_{a_0, a_1} \Pr \left[\vec{T}(c, \analyst(a_0)) = (a_0, a_1) \right] = 1
$$
Besides, to construct a valid two-round mechanism, the probability of outputting $a_0$ in the first round should not depend on the future query $q_1$:
$$
\forall c, a_0, \sum_{a_1} \Pr \left[\vec{T}(c, 0) = (a_0, a_1) \right] = \sum_{a_1} \Pr \left[\vec{T}(c, 1) = (a_0, a_1) \right]
$$

\salil{write out these constraints here!}\tianhao{done}
We use the linear programming solver from SciPy \cite{virtanen2020scipy} for solving the linear equation systems. 

In all of our trials, we find a feasible $\vec{T}$, concluding that each of the mechanisms
$\mechanism_{\bm{p}}$ can be simulated by the post-processing of randomized response of the same $(\eps, \delta)$ parameters. 

Based on the above findings, we conjecture that the concurrent composition of interactive DP mechanisms may still have the same bound as the composition for noninteractive DP mechanisms. Besides, we might be able to prove it through a similar construction of interactive post-processing mechanisms as we did in Theorem \ref{thm:reduction-pure}. This means that every interactive DP mechanisms can be reduced to noninteractive randomized response. We leave the resolution of these conjectures for future work.

\salil{if you have time before the deadline, it would be nice to extend the experiments to larger message spaces, and possibly even try 3-round protocols, and briefly mention the results here.  Ideally, I'd like to have a bit more evidence before asserting a conjecture}\tianhao{Unfortunately I'm working on 3 NIPS papers ... and two of them are first authored by me ... work until 5am almost everyday since the last week ... having saying that, I'm very serious about this paper! }
\salil{no problem - I understand!}

\salil{for our future work (post-deadline): these experiments suggest an approach to proving the post-processing theorem by linear programming.  suppose that we have a mechanism that cannot be simulated by randomized response.  Then the LP above is not feasible.  Hence by strong LP duality, the dual LP has a feasible solution.  Maybe we can interpret the solution to the dual LP as an adversary violating the $(\eps,\delta)$-DP property of $\mechanism$.} \tianhao{I see!}


\begin{figure}[H]
    \centering
	\includegraphics[width=0.7\columnwidth]{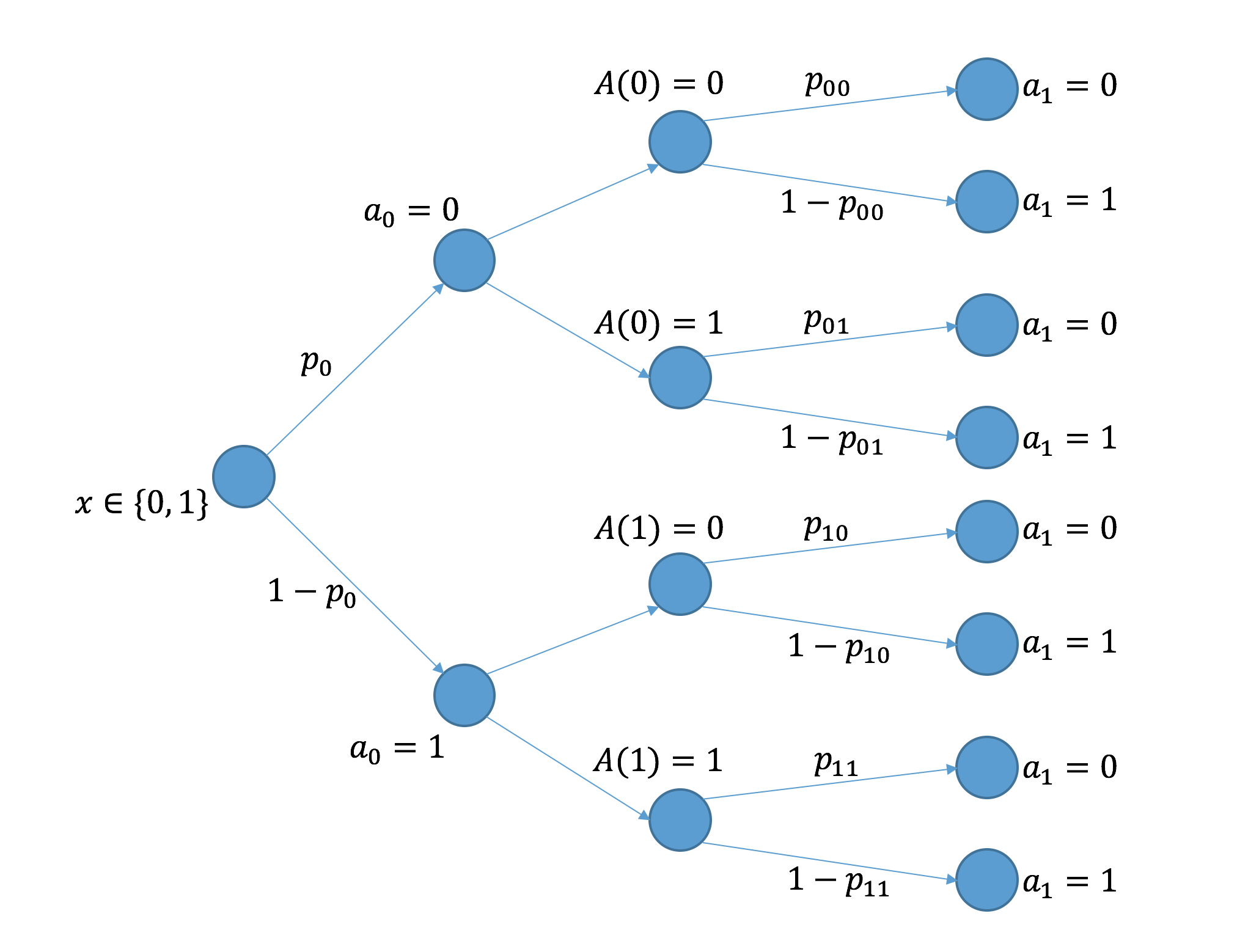}
	\caption{2-round mechanism we use in the experiment.}
	\label{fig:two-step-mechanism}
\end{figure}

\begin{figure}[H]
    \centering
	\includegraphics[width=0.7\columnwidth]{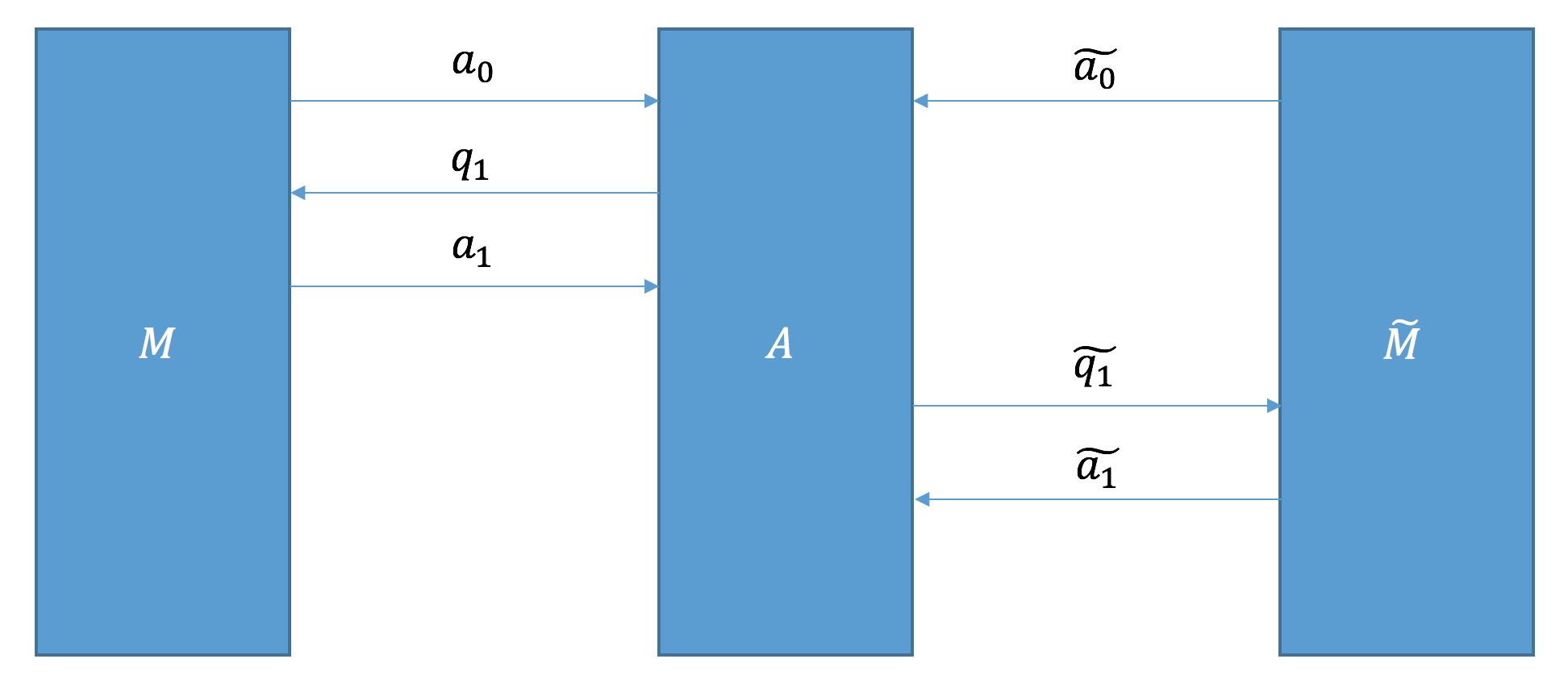}
	\caption{Concurrent Composition of 2-round Mechanisms}
	\label{fig:concomp}
\end{figure}

\salil{Tianhao, please add an acknowledgments section, thanking your committee members and Cynthia Dwork for helpful comments and discussions}\tianhao{done}

\section{Acknowledgement}
We sincerely thank Boaz Barak, Cynthia Dwork, Marco Gaboardi, Michael Hay, Weiwei Pan, and Andy Vyrros for helpful comments and discussions.

\newpage

\bibliographystyle{splncs04}
\bibliography{references}
\end{document}